\documentclass[final]{article}

\usepackage{xcolor}
\usepackage{amsthm}
\usepackage{enumitem}
\usepackage{soul}
\usepackage{graphicx}
\usepackage[linesnumbered,ruled,vlined,resetcount]{algorithm2e}










\newcommand{\facilityset}{\mathcal{F}}	
	

\newcommand{\clientset}{\mathcal{C}}			
\newcommand{\cliset}{\mathcal{C}}				



\newcommand{\capacityU}{\mathcal{U}}				

\newcommand{\cdense}{\mathcal{C}_D}				
\newcommand{\csparse}{\mathcal{C}_S}







\newcommand{\ballofj}[1]{\textit{$\mathcal{B}( #1 $)}} 	

\newcommand{\C}[1]{\hat{C_{#1}}}					

\newcommand{\bundle}[1]{\mathcal{N}_{#1}}			




\newcommand{\dist}[2]{c(#1,~#2)}

\newcommand{\facilitycost}{\textit{$ f_i $}}		











\newcommand{\singleclient}{\textit{j}}
\newcommand{\singlefacility}{\textit{i}}

\newcommand{\zofi}[1]{z_#1}

\newcommand{\primezofi}[1]{z'_{#1}}



\newcommand{\demandofj}[1]{\textit{$ d_{#1} $}}

\newcommand{\loadjinc}[2]{\phi(#1,~#2)}

\newcommand{\cardTwo}[1]{Y^*(#1)}

\newcommand{\etal}{\textit{et al}.}






\newcommand{\B}{\mathcal{L}}


\newcommand{\opt}[1]{LP_{opt}}

\newtheorem{theorem}{Theorem}[section]

\newtheorem{lemma}[theorem]{Lemma}
\newtheorem{claim}[theorem]{Claim}
\newtheorem{observation}[theorem]{Observation}
\newtheorem{definition}[theorem]{Definition}
\newcommand{\rhobar}{\bar\rho}
\newcommand{\cone}{y}

\newcommand{\ellone}{\ell}

\newcommand{\rootpair}{<r_1, r_2>}

\bibliographystyle{plain}
\begin{document}
	
	\title{First Approximation for Uniform Lower and Upper Bounded Facility Location Problem avoiding violation in Lower Bounds}
	
	\maketitle
	\begin{center}
		\author{Sapna Grover$^1$,}
		\author{Neelima Gupta$^2$ and}
		\author{Rajni Dabas$^3$}
	\end{center}
	\begin{enumerate}
		\item {Department of Computer Science, University of Delhi, India.\\
			\texttt{sgrover@cs.du.ac.in, sapna.grover5@gmail.com}}
		\item {Department of Computer Science, University of Delhi, India.\\
			\texttt{ngupta@cs.du.ac.in}}
		\item {Department of Computer Science, University of Delhi, India.\\
			\texttt{rajni@cs.du.ac.in}}
	\end{enumerate}
	
	\begin{abstract}
With growing emphasis on e-commerce marketplace platforms where we have a central platform mediating between the seller and the buyer, it becomes important to keep a check on the availability and profitability of the central store. A store serving too less clients can be non-profitable and a store getting too many orders can lead to bad service to the customers which can be detrimental for the business.
In this paper, we study the facility location problem(FL) with upper and lower bounds on the number of clients an open facility serves. 
Constant factor approximations are known for the restricted variants of the problem with only the upper bounds or only the lower bounds. The only work that deals with bounds on both the sides violates both the bounds \cite{friggstad_et_al_LBUFL}.
In this paper, we present the first (constant factor) approximation for the problem violating the upper bound by a factor of 
$(5/2)$ without violating the lower bounds when both the lower and the upper bounds are uniform.  
We first give a tri-criteria (constant factor) approximation violating both the upper and the lower bounds and 
then get rid of violation in lower bounds by transforming the problem instance to an instance of capacitated facility location problem. 

\end{abstract}


\section{Introduction}
Facility location problem (FL) is a well motivated and extensively studied problem. Given a set of facilities with facility opening costs and a set of clients with a metric specifying the connection costs between facilities and clients, the goal is to select a subset of facilities such that the total cost of opening the selected facilities and connecting clients to the opened facilities is minimized.

With growing emphasis on e-commerce marketplace platforms where we have a central platform mediating between the seller and the buyer, it becomes important to keep a check on the availability and profitability of the central store. A store serving too less clients can be non-profitable and a store getting too many orders can lead to bad service to the customers. This scenario leads to what we call as the lower- and upper- bounded facility location (LBUBFL) problem.
There are several other applications requiring both the lower as well as the upper bound on the number of clients assigned to the selected facilities. In a real world transportation problem presented by Lim et al.~\cite{Lim_transportation}, there are a set of carriers and a set of customers who wish to ship a certain number of cargoes through these carriers. There is a transportation cost associated to ship a cargo through a particular carrier. The goal of the company is to ship cargoes in these carriers such that the total transportation cost is minimized. There is a natural upper bound on the number of cargoes a carrier can carry imposed by the capacity of the carrier.  In addition, the shipping companies are required to engage their carriers with a \textquotedblleft minimum quantity commitment\textquotedblright~when shipping cargoes to United States. 
Another scenario in which the problem can be useful is in the applications requiring balancing of load 
on the facilities along with maintaining capacity constraints.

In this paper, we study the facility location problem with lower and upper bounds. We are given a set $ \clientset$ of clients and a set $\facilityset$ of facilities with lower bounds $\B_i$ and upper bounds $\capacityU_i$ on the minimum and the maximum number of clients a facility $i$ can serve, respectively. Setting up a facility at location $i$ incurs cost $f_i$(called the {\em facility opening cost}) and servicing a client $j$ by a facility $i$ incurs cost $\dist{i}{j}$ (called the {\em service cost}). We assume that the costs are metric, i.e., they satisfy the triangle inequality. Our goal is to open a subset $\facilityset' \subseteq \facilityset$ and compute an assignment function $\sigma : \clientset \rightarrow \facilityset'  $ (where $\sigma(j)$ denotes the facility that serves $j$ in the solution)
such that $ \B_i \leq |\sigma^{-1}(i)| \leq  \capacityU_i ~\forall i \in \facilityset'$
and, the total cost of setting up the facilities and servicing the clients is minimised. The problem is known to be NP-Hard.  We present the first (constant factor) approximation for the problem with uniform lower and uniform upper bounds, i.e., $\B_i=\B$ and $\capacityU_i=\capacityU~\forall i \in \facilityset$ without violating the lower bounds, as stated in Theorem \ref{mainThm}.

\begin{definition}
A tri-criteria $ (\alpha, \beta, \gamma)$- approximation for LBUBFL problem is a solution $S= (\facilityset',\sigma$)
satisfying $\alpha \B \leq~|\sigma^{-1}(i)| \leq \beta \capacityU ~\forall i \in \facilityset' , \alpha \le 1, \beta \ge 1$, with cost no more than  $\gamma OPT$, where $OPT$ denotes the cost of an optimal solution of the problem. 
\end{definition}

\begin{theorem} \label{mainThm}
 A $(1, 5/2, O(1))$- 
 approximation can be obtained for LBUBFL in polynomial time.
\end{theorem}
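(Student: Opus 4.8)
The plan is to prove Theorem~\ref{mainThm} in two stages, exactly as announced: first produce a tri-criteria $(\alpha,\beta,O(1))$-approximation for LBUBFL for some constants $\alpha<1\le\beta$, and then post-process it into a genuine $(1,5/2,O(1))$-solution by reducing the ``clean up'' task to capacitated facility location. Throughout, the cost yardstick is an optimal LBUBFL solution, which is also a feasible (integral) solution of every relaxation we write, so its cost upper-bounds $OPT$.

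\emph{Stage 1 (tri-criteria approximation).} I would begin from a relaxation of LBUBFL with opening variables $y_i$ and assignment variables $x_{ij}$, the assignment constraints $\sum_i x_{ij}=1$, the coupling $x_{ij}\le y_i$, and the box constraint $\B\,y_i\le\sum_j x_{ij}\le\capacityU\,y_i$ (possibly after first passing through an auxiliary ``uniform $k$-FL''-type formulation, as the naming of the macros suggests). Since the integrality gap of such an LP is unbounded already for lower-bounded FL and for capacitated FL in isolation, the rounding is necessarily allowed to violate both bounds by constants, which is precisely what a tri-criteria guarantee permits. Following the clustering-and-bundling paradigm, I would: (i) use filtering to assign each client $j$ a radius $\C{j}$ proportional to its fractional connection cost; (ii) split clients into a \emph{dense} part $\cdense$ (those whose ball of radius $\Theta(\C{j})$ carries $\Theta(1)$ fractional opening mass) and a \emph{sparse} part $\csparse$; (iii) greedily form disjoint \emph{bundles} around dense centers, each of fractional opening mass $\Theta(1)$; and (iv) route sparse clients to nearby dense centers, forming \MCs. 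Opening one (cheapest) facility per bundle and consolidating the assignments into the \MCs yields a solution in which every opened facility serves at least $\alpha\B$ and at most $\beta\capacityU$ clients: the opening cost is charged to $\sum_i f_i y_i$ (one facility per unit-mass bundle) and the connection cost is bounded, via the triangle inequality and the radii $\C{j}$, by $O(1)$ times the LP cost, hence by $O(1)\cdot OPT$.

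\emph{Stage 2 (removing the lower-bound violation).} The crucial structural fact coming out of Stage 1 is that the clients are partitioned into \MCs, each of size at least $\B$, so each \MC could on its own be served legally; the only obstruction is that a \MC may exceed $\capacityU$ and cannot be split evenly without some piece dropping below $\B$. To repair this globally I would build a capacitated facility location instance whose ``demand points'' encode the \MCs (with demand equal to the \MC size), whose facilities are a subset of $\facilityset$ with a uniform capacity tied to $\capacityU$, and whose metric is inherited from $c$, and then invoke a \emph{capacity-respecting} constant-factor algorithm for capacitated FL (e.g.\ the known local-search / primal-based algorithms for uniform capacities). Reading that solution back, each finally-opened facility collects the bulk of one \MC plus possibly the ``residual'' overflow of a neighbouring over-sized \MC; since every \MC already contributes at least $\B$ worth of clients and the capacitated-FL algorithm never overloads beyond $\capacityU$, a final integral re-rounding of the split demands (pushing each \MC's residual clients onto the facility carrying its bulk) gives an assignment $\sigma$ with $|\sigma^{-1}(i)|\ge\B$ and $|\sigma^{-1}(i)|\le(5/2)\capacityU$ at every opened $i$ — the $5/2$ being the sum of an at-most-$\capacityU$ legal load, an at-most-$\capacityU$ absorbed residual, and the $\le\tfrac12\capacityU$ slack that has to be reserved so that every receiving facility stays above $\B$. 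The cost telescopes: the capacitated-FL cost is $O(1)$ times the Stage-1 cost plus $O(1)\cdot OPT$ (an optimal LBUBFL solution maps, after demand aggregation over \MCs, to a feasible solution of the capacitated-FL instance), so the final guarantee is $(1,5/2,O(1))$.

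\emph{Main obstacle.} I expect Stage 2 to be the delicate part, in two respects. First, the capacitated-FL instance must be designed so that an optimal LBUBFL solution really does induce a cheap solution there, so that the $O(1)$ cost bound survives the reduction; aggregating clients into \MCs and choosing the right capacity is what makes this go through (or fail). Second, and more technically, one must argue that the capacity-respecting capacitated-FL output can always be ``un-aggregated'' back to the original clients so that \emph{every} opened facility ends with at least $\B$ clients: the residual pieces of over-sized \MCs have to be absorbable by a nearby opened facility without either dropping some other facility below $\B$ or pushing any load past $\tfrac52\capacityU$. Getting the three contributions to the load to sum to exactly $\tfrac52\capacityU$ — rather than some larger constant — is where the accounting has to be tight, and is the step I would spend the most care on.
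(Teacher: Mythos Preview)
Your Stage~1 is broadly aligned with the paper: an LP relaxation, filtering by fractional connection cost, clustering, and opening one facility per cluster does yield an $(\alpha,\beta,O(1))$ tri-criteria solution. The paper's concrete output is $\alpha>1/2$ and $\beta=3/2$, and the fact that $\alpha>1/2$ (not just $\alpha=\Theta(1)$) turns out to be essential later.

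Stage~2 is where your proposal has a genuine gap. You assert that after Stage~1 the clients are partitioned into groups each of size at least $\B$; but by definition of a tri-criteria solution with $\alpha<1$, each opened facility may serve as few as $\alpha\B<\B$ clients, so the very thing to be fixed is that some opened facilities are \emph{under} the lower bound. Your proposed CFL instance --- demand points equal to the groups with demand equal to group size, facilities with capacity $\capacityU$ --- is a pure upper-bounded instance; nothing in a capacity-respecting CFL solution prevents a facility from receiving far fewer than $\B$ clients, so ``reading the solution back'' gives no lower-bound guarantee. The accounting you sketch for $5/2=\capacityU+\capacityU+\tfrac12\capacityU$ is not a proof: you have not explained why each opened facility retains a bulk of at least $\B$, nor why residuals are at most $\capacityU$.

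The paper's reduction is quite different and is worth knowing. After the tri-criteria solution $S^t$, clients are first relocated to their serving facilities; the CFL instance then lives on $\facilityset^t$ and encodes \emph{deficits and surpluses}: a facility $i$ with $n_i<\B$ becomes a demand point with demand $\B-n_i$, while one with $n_i>\B$ spawns a free co-located facility of capacity $n_i-\B$ (plus a paid copy of capacity $\B$ with opening cost proportional to $n_i\cdot l(i)$, $l(i)$ being the distance to the nearest other facility in $\facilityset^t$). Solving this CFL instance and reversing the flow (type-1 reassignments) pushes clients toward the deficient facilities. This alone still does not guarantee every opened facility reaches $\B$; the remaining violators are handled by building nearest-neighbour \emph{facility trees} on $\facilityset^t$ and consolidating bottom-up, which adds at most $\B\le\capacityU$ clients to any node, giving the final load bound $\beta\capacityU+\B\le(\beta+1)\capacityU=\tfrac52\capacityU$. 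The cost of the tree step is charged to the CFL facility-opening cost via the $l(i)$ factor, and the cost of the root-pair case is charged to the CFL connection cost with a $1/(2\alpha-1)$ factor --- this is exactly where $\alpha>1/2$ is indispensable and why an arbitrary tri-criteria solution (e.g.\ that of Friggstad \etal, which has $\alpha<1/2$) cannot be plugged in.
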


Constant factor approximations are known for the problem with upper bounds only (popularly known as Capacitated Facility Location (CFL)) with~\cite{Shmoys,capkmByrkaFRS2013,GroverGKP18} and without~\cite{KPR,ChudakW99,mathp,paltree,mahdian_universal,zhangchenye,Bansal,Anfocs2014} violating the capacities using local search / LP rounding techniques. Constant factor approximations are also known for the problem with lower bounds only with~\cite{Minkoff_LBFL,Guha_LBFL,Han_LBkM} and without~\cite{Zoya_LBFL,Ahmadian_LBFL,Li_NonUnifLBFL} violating the lower bounds. The only work that deals with the bounds on both the sides is due to Friggstad et al. \cite{friggstad_et_al_LBUFL},
which deals with the problem with non-uniform lower bounds and uniform upper bounds.  They gave a constant factor approximation for the problem using LP-rounding,  violating both the upper and the lower bound by a constant factor. The technique cannot be used to get rid of the violation in the lower bounds even if they are uniform as the authors show an unbounded integrality gap for the problem. Thus, our result is an improvement over them when the lower bounds are uniform in the sense that they violate both the bounds whereas we do not violate the lower bounds.


\subsection{Related Work}
For capacitated facility location with  uniform capacities, Shmoys et al.~\cite{Shmoys} gave the first constant factor($7$) algorithm with a capacity blow-up of $7/2$ using LP rounding techniques.   An $O(1/\epsilon^2)$ factor approximation, with $(2 + \epsilon)$ violation in capacities, follows as a special case of C$k$FLP by Byrka et al.~\cite{capkmByrkaFRS2013}. Grover et al.~\cite{GroverGKP18} reduced the capacity violation to $(1 + \epsilon)$. For non-uniform capacities, Levi et al. \cite{LeviSS12} gave a 5-factor approximation algorithm using LP-rounding for a restricted version of the problem in which the facility opening costs are uniform. Later, An et al.~\cite{Anfocs2014} gave the first LP-based constant factor approximation algorithm without violating the capacities, by strengthening the natural LP. 
 
 The local search technique has been particularly useful to deal with capacities. Korupolu et al.~\cite{KPR} gave the first constant factor ($8+\epsilon$) approximation for  uniform capacities, which was further improved by Chudak and Williamson  \cite{ChudakW99} to ($6+\epsilon$). 
 The factor was subsequently reduced to $3$ by Aggarwal et al.~\cite{mathp}, which is also the best known result for the problem. For non-uniform capacities, Pal et al.~\cite{paltree} gave the first constant factor $(8.53 + \epsilon)$ approximation algorithm which was subsequently improved to $(7.88 +\epsilon)$ by Mahdian and Pal~\cite{mahdian_universal} and to  $(5.83+\epsilon)$ by Zhang et al.~\cite{zhangchenye} with the current best being $(5 + \epsilon)$ due to Bansal et al.~\cite{Bansal}. 

Lower-Bounded Facility Location (LBFL) problem was introduced by Karger and Minkoff \cite{Minkoff_LBFL} and Guha et al.~\cite{Guha_LBFL} independently in 2000. Both of them gave a bi-criteria algorithm, that is a constant-factor approximation with constant factor violation in lower bounds. Zoya Svitkina \cite{Zoya_LBFL} presented the first true constant factor($448$) approximation for the problem with uniform lower bounds by reducing the problem to CFL. This was later improved to $82.6$ by Ahmadian and Swamy \cite{Ahmadian_LBFL} using reduction to a special case of CFL called  Capacitated Discounted Facility Location. Later Shi Li \cite{Li_NonUnifLBFL} gave the first true constant ($4000$) factor approximation for the problem with non uniform lower bounds. Li obtained the result by reducing the problem  to CFL via two intermediate problems called, LBFL-P(Lower bounded facility location with penalties) and TCSD(Transportation with configurable supplies and demands).
Han et al. \cite{Han_LBkM} gave a bi-criteria solution for the problem as a particular case of lower bounded $k$-FL problem when the lower bounds are non-uniform.

Another variant of LBFL, called Lower bounded $k$-Median problem(LB$k$M) was considered by Guo et al. \cite{Guo_LBkM} and Arutyunova and Schmidt \cite{arutyunova_LBkM} with uniform lower bounds, where they gave constant factor approximations for the problems by reducing them to CFL and LBFL respectively. For Lower bounded $k$-Facility Location problem LB$k$FL problem with non-uniform lower bounds, Han et al. \cite{Han_LBkM} presented a bi-criteria algorithm giving $\alpha$ factor violaton in lower bounds and $\frac{1+\alpha}{1-\alpha}\rho$-approximation where $\rho$ is the approximation ratio for $k$FL problem and $\alpha$ is a constant. They also extend these results to LB$k$M and lower bounded knapsack median problem.

Friggstad et al.~\cite{friggstad_et_al_LBUFL} is the only work that deals with the problem with bounds on both the sides. The paper considers non-uniform lower bounds and uniform upper bounds. They gave a constant factor approximation, using LP rounding techniques, violating both the lower and the upper bounds. 

\subsection{High Level Idea}
Let $I$ be an input instance of the LBUBFL problem. We first present a tri-criteria solution $(> 1/2, 3/2, O(1))$ violating both the lower as well as the upper bound and then 
get rid of the violation in the lower bound by reducing the problem instance $I$ to an instance $I_{cap}$ of CFL via a series of reductions ($I \rightarrow I_1 \rightarrow I_2 \rightarrow I_{cap}$).
As we will see that maintaining  $\alpha > 1/2$ is crucial in getting rid of the violation in the lower bound and hence the tri-criteria solution of Friggstad et. al.~\cite{friggstad_et_al_LBUFL} cannot be used here as $\alpha < 1/2$ in their work. Also, when the lower bounds are uniform, our approach is comparatively simpler and straightforward.
%
Let $Cost_\mathcal{I}(\mathcal{S})$ denotes the cost of a solution $\mathcal{S}$ to an instance $\mathcal{I}$. We work in the following steps:
\begin{enumerate}
    \item 

    We first compute a tri-criteria solution 
    $(> 1/2, 3/2, O(1))$ - approximation
    $S^t = (\facilityset^t, \sigma^t $) to $I$ via clustering and filtering techniques. 
    Thus, $\alpha >1/2$ and $\beta = 3/2$.
    
   \item We transform the instance $I$ into another instance $I_1$ of LBUBFL by moving the clients assigned to a facility $i$ in the  tri-criteria solution $S^t$ to $i$.  Let $O$ denote the optimal solution of $I$.
   To construct a solution of $I_1$, a client $j$ is assigned to a facility $i$ if it was assigned to $i$ in $O$. The connection cost is bounded by the sum of the cost $j$ pays in $S^t$ and in $O$. Thus, $Cost_{I_1}(O_1) \le  Cost_I(S^t) + Cost_I(O)$ where $O_1$ denotes the optimal solution of $I_1$.

    \item The instance $I_1$ is transformed into another instance $I_2$ of LBFL, ignoring the upper bounds. The facility set is reduced to $\facilityset^t$ and the facilities in $\facilityset^t$ are assumed to be available free of cost. If a client $j$ is assigned, in $O_1$, to a facility $i$ not in $\facilityset^t$ then it is assigned to the facility $i'\in \facilityset^t$ nearest to $i$ (assume that the distances are distinct) and we open $i'$. The connection cost is bounded by twice the cost $j$ pays in $O_1$ by a simple triangle inequality. Let $O_2$ denote the optimal solution of $I_2$. Then,  $Cost_{I_2}(O_2) \le 2Cost_{I_1}(O_1)$.

    \item Finally, we create an instance $I_{cap}$ of CFL from $I_2$. The key idea in the reduction is: let $n_i$ be the number of clients assigned to a facility $i$ in $\facilityset^t$. If $i$ violates the lower bound, we create a demand of $\B - n_i$ at $i$ in the CFL instance otherwise we create a supply of $n_i - \B$ at $i$. The solution of the CFL instance then guides us to increase the assignments at some of the violating facilities until it gets $\B$ clients or decides to shut them.
    The process results in an increased violation in the capacities by plus $1$. 
    
    Let $O_{cap}$ be an optimal solution of $I_{cap}$. We show that  $Cost_{I_{cap}}(O_{cap}) \le (1+2\delta) Cost_{I_2}(O_2)$ for an appropriately chosen constant $\delta$.
    
    \item We obtain an $(5 + \epsilon)$-approximate solution $ AS_{cap}$ for $I_{cap}$ by using the algorithm of Bansal \etal~\cite{Bansal} for CFL.
    Thus, $Cost_{I_{cap}}(AS_{cap}) \le (5 + \epsilon) Cost_{I_{cap}}(O_{cap})$.
    
    
     \item From $ AS_{cap}$, we obtain an approximate solution $AS_1$ to $I_1$ 
     such that the upper bounds are violated by a factor of $(\beta + 1)$ with no violation in lower bounds and $Cost_{I_1}(S_1) \le \frac{2(\alpha+1)}{(2\alpha-1)} Cost_{I_{cap}}(AS_{cap})$. Facility trees are constructed and processed bottom-up. Clients are either moved up in the tree to the parent or to a sibling until we collect  at least $\B$ clients at a facility. Whenever $\B$ clients are assigned to a facility $i$, $i$ is opened and the subtree rooted at $i$ is chopped off the tree and the process is repeated with the remaining tree. 

   \item From $AS_1$, we obtain a solution $S$ to $I$ by paying the facility cost and moving the clients back to their original location. Thus, $ Cost_I(S) \le  Cost_I(S^t) + Cost_{I_1}(AS_1)$.

\end{enumerate}

Our main contributions are in Steps $1$ and $6$. 

\subsection{Organisation of the Paper}
In Section \ref{tri-criteria}, we present a tri-criteria algorithm for LBUBFL using LP rounding techniques. In Section \ref{instI1}, we reduce instance $I$ to $I_1$ and then $I_1$ to $I_2$, followed by reduction to  $I_{cap}$ in Section \ref{instIcap}. Finally, a bi-criteria solution, that does not violate the lower bounds, is obtained in Section \ref{bi-criteria}.
\section{Computing the Tri-criteria Solution}
\label{tri-criteria}

 In this section, we first give a tri-criteria solution that violates the lower bound by a factor of $\alpha = (1 - \frac{1}{\ell})$ 
 and the upper bound by a factor of $\beta = (2 - \frac{1}{\ell})$, where $\ell \ge 2$ is a tune-able parameter. 
 This is one of the two major contributions of our work.
 In particular, we present the following results:

 \begin{theorem} \label{TricriteriaResult}
 An $((1 - \frac{1}{\ell}),(2 - \frac{1}{\ell}),(10\ell+4))$-
 approximate solution can be obtained for LBUBFL in polynomial time, where $\ell \ge 2$ is a tuneable parameter.
\end{theorem}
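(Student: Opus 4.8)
The plan is to round the natural LP relaxation of LBUBFL by clustering and filtering, with the tuneable parameter $\ell$ controlling the filtering radius. I would use the relaxation with an opening variable $y_i$ for each facility $i$ and an assignment variable $x_{ij}$ for each pair, the constraints $\sum_{i} x_{ij} = 1$ for every client $j$, $x_{ij}\le y_i$, $\B\, y_i \le \sum_{j} x_{ij} \le \capacityU\, y_i$ for every $i$, and $0\le y_i\le 1$, minimising $\sum_i f_i y_i + \sum_{i,j}\dist{i}{j}\, x_{ij}$. Solving it yields $(x^{*},y^{*})$; set $C^{*}_{j}=\sum_i\dist{i}{j}\, x^{*}_{ij}$ and, for each client $j$, $\neighbor{j}=\{\, i : x^{*}_{ij}>0,\ \dist{i}{j}\le \ell\, C^{*}_{j}\,\}$. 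Markov's inequality gives $\sum_{i\in\neighbor{j}} x^{*}_{ij} > 1-1/\ell$, hence also $\sum_{i\in\neighbor{j}} y^{*}_i > 1-1/\ell$ since $x^{*}_{ij}\le y^{*}_i$. The number $1-1/\ell$ is precisely the lower-bound guarantee $\alpha$ we target, while the radius $\ell\, C^{*}_{j}$ is what produces the $\Theta(\ell)$ in the cost.

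Next I would cluster the clients: process them in non-decreasing order of $C^{*}_{j}$, declare $j$ a cluster centre whenever $\neighbor{j}$ is disjoint from the neighbourhoods of all previously chosen centres, and attach every remaining client to the first chosen centre whose neighbourhood it meets. Distinct centres then have disjoint neighbourhoods, so the facility mass about to be spent is charged to disjoint parts of $y^{*}$; and if $j$ is attached to centre $j'$, a triangle inequality through a facility of $\neighbor{j}\cap\neighbor{j'}$, together with $C^{*}_{j'}\le C^{*}_{j}$, gives $\dist{j}{j'}\le \ell\, C^{*}_{j}+\ell\, C^{*}_{j'}\le 2\ell\, C^{*}_{j}$. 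It looks natural to split the clients into a \emph{dense} group, whose neighbourhood already concentrates essentially one full unit of $y^{*}$ close to the client and which can be served from a facility opened in its own neighbourhood, and a \emph{sparse} group, whose fractional mass is too spread out and must be pooled into meta-clusters so that enough opening -- and enough clients -- is gathered before rounding.

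Then I would open facilities and enforce both bounds simultaneously. Inside a (meta-)cluster the fractional solution opens at least $1-1/\ell$ unit of facility and, by the LP's lower and upper bound constraints, the fractional demand it carries is at least $(1-1/\ell)\B$ and at most $\capacityU$ per unit opened. Rescaling the $y^{*}$ inside a cluster by $\ell/(\ell-1)$ (capped at $1$) makes each cluster hold at least a full unit; I would then consolidate this mass onto the cheapest $\lceil\,\cdot\,\rceil$ facilities of the cluster and redistribute the cluster's clients. Collapsing one fractional unit that served at most $\capacityU$ clients into a single integral facility, plus the at most one extra fractional unit left by the ceiling, is what inflates the upper bound to $(2-1/\ell)\capacityU$, whereas the demand one can certify at an opened facility is only $(1-1/\ell)\B$; clusters too light to host even one facility meeting $\alpha\B$ have to be merged with a neighbour, at the price of one more cluster radius. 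The facility cost stays $O(1)$ because $\ell\ge 2$ forces $\ell/(\ell-1)\le 2$ and the cheapest-$\lceil\,\cdot\,\rceil$ choice costs only another constant factor; a client $j$ pays $C^{*}_{j}$ to reach its fractionally-serving facilities, $O(\ell)\, C^{*}_{j}$ to reach its centre and $O(\ell)\, C^{*}_{j}$ more to reach the facility actually opened (using $C^{*}_{j'}\le C^{*}_{j}$ throughout), which after tracking the constants gives the claimed $10\ell+4$; every step is clearly polynomial.

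The main obstacle is exactly this last step. The LP's lower bound constraints hold only \emph{globally}, so a single cluster can own far fewer than $\B$ clients, and the clusters therefore cannot be rounded independently; one must orchestrate a global redistribution -- merging light clusters, forming meta-clusters, moving clients along a forest of facilities and clusters -- that leaves between $(1-1/\ell)\B$ and $(2-1/\ell)\capacityU$ clients at \emph{every} opened facility while paying only $O(\ell)$ times the fractional connection cost. Getting this redistribution to deliver $\alpha$ strictly above $1/2$ (i.e.\ $\ell\ge 2$) is the delicate point, and it is precisely the property the paper's later reductions rely on.
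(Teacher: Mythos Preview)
Your high-level framework (LP $\to$ filtering with radius $\ell\,\hat C_j$ $\to$ clustering $\to$ rounding) matches the paper, and you correctly identify that Markov gives $\ge 1-1/\ell$ opening mass in each ball, which combined with the LP constraint $\sum_j x^*_{ij}\ge \B\, y^*_i$ yields $\ge (1-1/\ell)\B$ demand. But the rounding step, which you yourself flag as the obstacle, is where your proposal diverges from the paper and remains a genuine gap.

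The paper does \emph{not} cluster clients by disjoint neighbourhoods in the Shmoys--Tardos--Aardal style you describe. Instead, after selecting well-separated centres $j'\in\clientset'$, it partitions the \emph{facilities}: each $i\in\facilityset$ goes to the cluster $\bundle{j'}$ of its nearest centre. Because centres are $>2\ell\max\{\hat C_{j'},\hat C_{k'}\}$ apart, every ball $\ballofj{j'}$ lies entirely in $\bundle{j'}$, so every cluster automatically carries $\ge(1-1/\ell)\B$ demand and there are no ``light clusters'' to merge. The sparse/dense split is then by \emph{total demand} $d_{j'}=\sum_j\sum_{i\in\bundle{j'}}x^*_{ij}$ versus $\capacityU$, not by opening mass: a sparse cluster ($d_{j'}\le\capacityU$) opens the single cheapest facility in $\ballofj{j'}$ and absorbs all of $d_{j'}$; a dense cluster ($d_{j'}>\capacityU$) is handled by an auxiliary LP (minimise $\sum_i(f_i+\capacityU\,\dist{i}{j'})z_i$ subject to $\B\sum z_i\le d_{j'}\le\capacityU\sum z_i$), whose almost-integral optimum has one fractional $z'_{i'}$ that is rounded down if $z'_{i'}\le 1-1/\ell$ and up otherwise. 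This case split is exactly what produces the $(2-1/\ell)$ upper-bound blow-up and the $(1-1/\ell)$ lower-bound guarantee simultaneously; the demand $d_{j'}$ is then split \emph{evenly} among the opened facilities.

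Your ``rescale by $\ell/(\ell-1)$, consolidate onto cheapest $\lceil\cdot\rceil$ facilities, merge light clusters'' sketch does not supply a mechanism that certifies both bounds at once: without the facility-based partition you cannot guarantee each cluster already has enough demand, and without the auxiliary LP and its $1-1/\ell$ threshold you do not get the precise $(2-1/\ell)$ and $(1-1/\ell)$ factors. The paper's approach sidesteps the global redistribution you worry about by making it local to each cluster; filling in that mechanism is the missing piece.
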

 
 
 Instance $I$ of LBUBFL can be formulated as the following integer program (IP):

\label{{unif-CFLP}}
$Minimize ~\mathcal{C}ostLBUBFL(x,y) = \sum_{j \in \clientset}\sum_{i \in \facilityset}\dist{i}{j}x_{ij} + \sum_{i \in \facilityset}f_iy_i $
\begin{eqnarray}
subject~ to &\sum_{i \in \facilityset}{} x_{ij} \geq 1 & \forall ~\singleclient \in \clientset \label{LPFLP_const1}\\ 
& \capacityU y_i \geq \sum_{j \in \clientset}{} x_{ij} \geq \B y_i & \forall~ \singlefacility \in \facilityset \label{LPFLP_const3}\\
& x_{ij} \leq y_i & \forall~ \singlefacility \in \facilityset , ~\singleclient \in \clientset \label{LPFLP_const4}\\   
& y_i,x_{ij} \in \left\lbrace 0,1 \right\rbrace  \label{LPFLP_const5}
\end{eqnarray}
where  $y_i$ is an indicator variable which is equal to $1$ if facility $i$ is open and is $0$ otherwise. $x_{ij}$ is an indicator variable which is equal to $1$ if client $j$ is served by facility $i$ and is $0$ otherwise. Constraints \ref{LPFLP_const1} ensure that every client is served. 
 Constraints \ref{LPFLP_const3} make sure that the total demand assigned to an open facility is at least $\B$ and at most $\capacityU$. Constraints \ref{LPFLP_const4} ensure that a client is assigned to a facility only if it is opened. LP-Relaxation  is obtained by allowing the variables to be non-integral.
Let $\zeta^{*} = <x^*, y^*>$ be an optimal solution to the LP-relaxation and $\opt{}$ be its cost.


We start by sparsifying the problem instance by removing some clients.
For $j \in \clientset$, let $ \C{j}= \sum_{i \in \facilityset} x^*_{ij} \dist{i}{j}$ denote the average connection cost paid by $\singleclient$ in $\zeta^*$. 
Further, let $\ell \ge 2$ be a tuneable parameter, $\ballofj{j}$ be the ball of facilities within a radius of $\ellone\C{j}$ of $j$ and $\cardTwo{\ballofj{j}}$ be the total extent up to which facilities are opened in $\ballofj{j}$ under solution $\zeta^*$ , i.e.,  $\cardTwo{\ballofj{j}} = \sum_{i \in \ballofj{j}} y^*_i $.
Then, $\cardTwo{\ballofj{j}} \ge (1-\frac{1}{\ellone}) \ge 1/2$.
The clients are processed in the non-decreasing order of the radii of their balls, removing the close-by clients with balls of larger radii and dissolving their balls: let $\bar{\clientset} = \clientset$ and $\clientset'$ denote the sparsified set of  clients. Initially $\clientset' = \phi$.
Let $j'$ be a client in $\bar{\clientset}$ with a ball of the smallest radius (breaking the ties arbitrarily). Remove $j'$ from $\bar{\clientset}$ and add it to $\clientset'$. For all $j (\ne j') \in \bar{\clientset}$ with $\dist{j'}{ j} \leq 2\ell \C{j}$, remove $j$ from $\bar{\clientset}$. Repeat the process until $\bar{\clientset} = \phi$. 
Cluster of facilities are formed around the clients in $\clientset'$ by assigning a facility to the cluster of $j' \in \clientset'$ if and only if $j'$ is nearest to the facility amongst all $j' \in \clientset'$, i.e. if $\bundle{j'}$ denotes the cluster centered at $j'$ then, $i$ belongs to $\bundle{j'}$ if and only if $\dist{i}{j'} < \dist{i}{k'}$ for all $k'(\ne j') \in \clientset'$ (assuming that the distances are distinct).
The clients in $C'$ are then called the cluster centers.

\begin{observation}
Any two cluster centers $j', k'$ in $C'$ satisfy the  separation property: $\dist{j'}{k'} > 2\ellone~max\{ \C{j'},\C{k'}\}$.
\end{observation}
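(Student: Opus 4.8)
The plan is to trace through the sparsification procedure and argue that once a client $j'$ is selected as a cluster center, every other surviving client with a ball of radius comparable to or smaller than $j'$'s has already been removed if it is close enough to $j'$. Concretely, recall the clients are processed in non-decreasing order of $\ell\C{j}$, and when $j'$ is added to $\clientset'$ we delete every $j \ne j'$ still in $\bar{\clientset}$ with $\dist{j'}{j} \le 2\ell\C{j}$. So fix two cluster centers $j', k' \in \clientset'$, and without loss of generality suppose $j'$ was selected before $k'$; then at the moment $j'$ was selected, $k'$ was still in $\bar{\clientset}$ (it survives to become a center), hence the deletion rule did \emph{not} fire on $k'$, i.e.\ $\dist{j'}{k'} > 2\ell\C{k'}$.

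Next I would upgrade this one-sided bound to the symmetric $\max$. Since $j'$ is processed no later than $k'$ and the processing order is non-decreasing in $\ell\C{\cdot}$, we have $\ell\C{j'} \le \ell\C{k'}$, i.e.\ $\C{j'} \le \C{k'}$. Therefore $\max\{\C{j'}, \C{k'}\} = \C{k'}$, and the inequality $\dist{j'}{k'} > 2\ell\C{k'}$ obtained above is exactly $\dist{j'}{k'} > 2\ell \max\{\C{j'}, \C{k'}\}$, as claimed.

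The one subtlety worth stating carefully is the tie-breaking: ties in the radii are broken arbitrarily but consistently, so ``processed before'' is a well-defined total order consistent with non-decreasing $\C{\cdot}$, and a client selected as a center is never itself deleted (the rule only deletes clients other than the one just selected, and a deleted client never re-enters $\bar{\clientset}$). I do not expect any real obstacle here — the argument is a direct unrolling of the greedy selection invariant — the only thing to be careful about is which of $j', k'$ plays the role of the "earlier" center so that the radius monotonicity points the right way.
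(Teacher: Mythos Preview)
Your argument is correct and is exactly the intended one: the paper states this as an Observation without proof, treating it as immediate from the greedy sparsification rule, and your unrolling of that rule (earlier center $j'$ implies $\C{j'}\le\C{k'}$, and survival of $k'$ implies $\dist{j'}{k'}>2\ell\C{k'}=2\ell\max\{\C{j'},\C{k'}\}$) is precisely the justification the reader is expected to supply.
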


\begin{lemma} \label{distBound}
Let $j' \in \clientset',~i \in \bundle{j'},~j \in \clientset$. Then,
\begin{enumerate}
    \item \label{claim-1} $\dist{i}{j'}  \leq \dist{i}{j} + 2\ellone \C{j}$
    \item \label{claim-2} $\dist{j}{j'} \leq 2\dist{i}{j} + 2\ellone \C{j}$
    \item \label{claim-3} If $\dist{j}{j'} \leq \ell \C{j'}$, then $\C{j'} \le 2\C{j}$.
\end{enumerate}
\end{lemma}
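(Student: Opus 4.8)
The plan is to prove all three parts using only the definitions set up just before the lemma — namely that $i \in \bundle{j'}$ means $j'$ is the cluster center nearest to $i$, that $\C{j} = \sum_{i \in \facilityset} x^*_{ij}\dist{i}{j}$ is an average connection cost, and that the sparsification guarantees every client $j \in \clientset$ that was removed has some surviving center $j'$ with $\dist{j'}{j} \le 2\ell\C{j}$ — together with the triangle inequality and a Markov-type averaging argument on the ball of radius $\ell\C{j}$ around $j$.

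First I would prove part~\ref{claim-1}. Fix $j \in \clientset$. Since $\cardTwo{\ballofj{j}} = \sum_{i \in \ballofj{j}} y^*_i \ge 1 - \tfrac{1}{\ell} \ge \tfrac12$ and the total assignment $\sum_i x^*_{ij} \ge 1$, an averaging/pigeonhole argument shows there is a facility, or rather it suffices to observe that the ball $\ballofj{j}$ of radius $\ell\C{j}$ around $j$ must contain a facility that belongs to $\bundle{j'}$ — more carefully, I would argue: among the facilities serving $j$ fractionally, by Markov's inequality a constant fraction of the assignment lies within radius $\ell\C{j}$ of $j$, so $\ballofj{j}$ is nonempty and in fact contains facilities carrying positive $x^*$-mass. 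Let $i_0 \in \ballofj{j}$ be such a facility and let $k'$ be the center of the cluster containing $i_0$. Then $\dist{i_0}{k'} \le \dist{i_0}{j'}$ by the nearest-center rule is not quite what I want; instead I use that $i \in \bundle{j'}$ gives $\dist{i}{j'} \le \dist{i}{k'}$ for every other center $k'$, and then chain $\dist{i}{j'} \le \dist{i}{k'} \le \dist{i}{j} + \dist{j}{i_0} + \dist{i_0}{k'} \le \dist{i}{j} + \ell\C{j} + \ell\C{j}$. Hmm — that gives $2\ell\C{j}$, not $\ell\C{j}$ extra; to get the stated bound I instead take $k'$ to be the center guaranteed by sparsification with $\dist{k'}{j} \le 2\ell\C{j}$ if $j$ was removed, or $k'=j$ itself (so $j \in \clientset'$) in which case the bound is immediate. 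Combining $\dist{i}{j'} \le \dist{i}{k'} \le \dist{i}{j} + \dist{j}{k'}$ — I need to reconcile the $2\ell$ versus $\ell$; I expect the intended argument routes through a facility $i_0$ in the radius-$\ell\C{j}$ ball, using $\dist{i}{j'} \le \dist{i}{i_0} + \dist{i_0}{k'}$ where $k'$ is $i_0$'s center and $\dist{i_0}{k'} \le \ell\C{j}$ cannot be assumed, so the clean route is: $\dist{i}{j'}\le \dist{i}{k'}\le \dist{i}{j_0'}$ for the center $j_0'$ closest to $j$ in $\clientset'$, and $\dist{i}{j_0'}\le \dist{i}{j}+\dist{j}{j_0'}\le \dist{i}{j}+2\ell\C{j}$; but we want $+2\ell$ only in part (1)? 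Re-reading: part~\ref{claim-1} states $\dist{i}{j'} \le \dist{i}{j} + 2\ell\C{j}$ — good, so this chaining is exactly it. Then part~\ref{claim-2} follows immediately: $\dist{j}{j'} \le \dist{j}{i} + \dist{i}{j'} \le \dist{i}{j} + (\dist{i}{j} + 2\ell\C{j}) = 2\dist{i}{j} + 2\ell\C{j}$.

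Part~\ref{claim-3} I would prove separately using the separation property from the Observation. Suppose $\dist{j}{j'} \le \ell\C{j'}$. Since $\cardTwo{\ballofj{j}} \ge \tfrac12$, there is a facility within radius $\ell\C{j}$ of $j$, hence within radius $\ell\C{j} + \ell\C{j'}$ of $j'$ by triangle inequality through $j$; if also $\C{j'} > 2\C{j}$ then I want to derive a contradiction with the separation property or with the sparsification order. The natural route: the sparsification processes clients in nondecreasing order of ball radius; if $\dist{j}{j'}\le \ell\C{j'}$ and $\C{j'}>2\C{j}$ then $\dist{j}{j'} \le \ell\C{j'}$ but $j$ has a strictly smaller ball radius, so $j$ would have been processed first and would have removed $j'$ (since $\dist{j}{j'}\le 2\ell\C{j}$ would need to hold — but $\dist{j}{j'}\le \ell\C{j'}$ and $\C{j'}>2\C{j}$ gives $\dist{j}{j'}$ possibly larger than $2\ell\C{j}$, so this needs care). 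The cleanest contrapositive argument: if $\C{j'}>2\C{j}$, then from $\dist{j}{j'}\le \ell\C{j'}$ we cannot directly bound things, so instead I assume for contradiction $\C{j'} > 2\C{j}$, i.e. $\ell\C{j} < \tfrac{\ell}{2}\C{j'}$, and combine with $\dist{j}{j'} \le \ell\C{j'}$; the ball of radius $\ell\C{j}$ around $j$ contains a facility $i_0$, and $i_0$'s cluster center $k'$ satisfies $\dist{i_0}{k'}\le\dist{i_0}{j'}\le \ell\C{j}+\ell\C{j'}$, then use separation between $k'$ and $j'$ (if distinct) to force $k' = j'$, which then bounds $\C{j'}$ in terms of $\dist{i_0}{j'}\ge$ something — this is where I expect the real work to be, and where the factor of $2$ in the conclusion gets pinned down. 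The main obstacle is part~\ref{claim-3}: parts~\ref{claim-1} and~\ref{claim-2} are pure triangle-inequality chaining once the right intermediate center is chosen, but part~\ref{claim-3} genuinely requires exploiting the processing order of the sparsification procedure (smaller balls first) together with the $2\ell\C{j}$ removal radius, and getting the constant exactly $2$ rather than something larger will be the delicate point.
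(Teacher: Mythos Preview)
Your arguments for parts~\ref{claim-1} and~\ref{claim-2} are correct and, once the exploratory detours are stripped away, identical to the paper's: pick the center $k'\in\clientset'$ with $\dist{j}{k'}\le 2\ell\C{j}$ (either $k'=j$ if $j\in\clientset'$, or the center that caused $j$'s removal during sparsification), use $i\in\bundle{j'}$ to get $\dist{i}{j'}\le\dist{i}{k'}$, and apply the triangle inequality. Part~\ref{claim-2} then follows from part~\ref{claim-1} by one more triangle inequality, exactly as you wrote.

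Part~\ref{claim-3} has a genuine gap. Your facility-based route (pick $i_0\in\ballofj{j}$, let $k'$ be its cluster center, bound $\dist{k'}{j'}$) is too lossy: you only get $\dist{k'}{j'}\le 2(\ell\C{j}+\ell\C{j'})$, which under the assumption $\C{j'}>2\C{j}$ gives $\dist{k'}{j'}<3\ell\C{j'}$, and that does \emph{not} contradict the separation bound $\dist{k'}{j'}>2\ell\C{j'}$. The processing-order idea you sketched first is also aimed in the wrong direction (you try to have $j$ remove $j'$, but you cannot control whether $\dist{j}{j'}\le 2\ell\C{j}$).

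The missing step is simpler than either of your attempts and reuses the very same $k'$ you already identified for part~\ref{claim-1}. First observe that the hypothesis $\dist{j}{j'}\le\ell\C{j'}$ together with the separation property forces $j\notin\clientset'$ (unless $j=j'$, in which case the conclusion is trivial). Hence there is a center $k'\in\clientset'$ with $\dist{j}{k'}\le 2\ell\C{j}$. Now assume for contradiction that $\C{j'}>2\C{j}$; then
\[
\dist{k'}{j'}\le \dist{k'}{j}+\dist{j}{j'}\le 2\ell\C{j}+\ell\C{j'}<\ell\C{j'}+\ell\C{j'}=2\ell\C{j'},
\]
which violates the separation property between $k'$ and $j'$ (and if $k'=j'$ then $\C{j'}=\C{k'}\le\C{j}$ by the processing order, again a contradiction). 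The constant $2$ in the conclusion is exactly what makes $2\ell\C{j}<\ell\C{j'}$ hold, so it is not delicate once you route through the sparsification center rather than through a facility.
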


\begin{proof}
Let $j' \in \clientset',~i \in \bundle{j'},~j \in \clientset$. 
\begin{enumerate}

    \item Note that, $\dist{j}{k'} \leq 2\ellone \C{j}$ for some $k' \in \clientset'$. Then we have $\dist{i}{j'} \leq \dist{i}{k'} \leq \dist{i}{j} + \dist{j}{k'} \leq \dist{i}{j} + 2\ellone \C{j}$, where the first inequality follows because $i$ belongs to $\bundle{j'}$ and not $\bundle{k'}$ whenever $k' \ne j'$.  See Fig: (\ref{Cjj}).

    \item Using triangle inequality, we have $\dist{j}{j'} \leq \dist{i}{j} + \dist{i}{j'} \leq 2\dist{i}{j} + 2\ellone \C{j}$.
     \item Let $j \ne j'$. Note that  $\dist{j}{j'} \leq \ell \C{j'} \Rightarrow j \notin \clientset'$. 
     Suppose if possible, $\C{j'} > 2\C{j}$. Since $j \notin \cliset', \exists$ some $k'\in \clientset':\dist{j}{k'} \le 2 \ell \C{j}$. Then, $\dist{k'}{j'} \le \dist{k'}{j} + \dist{j}{j'}$ $ \leq 2 \ell \C{j} + \ell \C{j'} < 2\ell \C{j'}$. Thus, we arrive at a contradiction to the separation property. Hence, $\C{j'} \le 2\C{j}$.
\end{enumerate} \end{proof}

\begin{figure} 
    \centering
    \includegraphics[width=10cm]{ 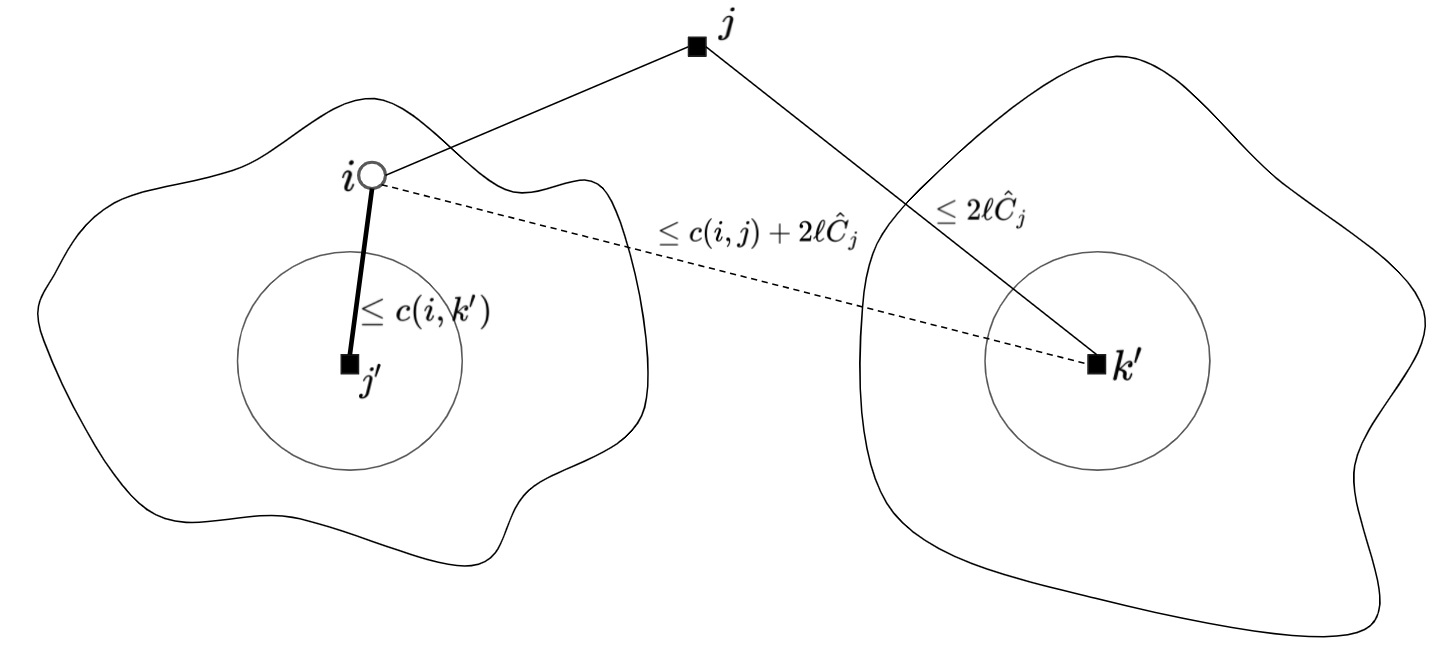}
    \caption{$\dist{i}{j'}  \leq \dist{i}{j} + 2\ellone \C{j}$}
    \label{Cjj}
\end{figure}


For $j' \in \clientset', j \in \clientset$, let $\loadjinc{j}{j'}$ be the extent up to which $j$ is served by the facilities in the cluster of $j'$ under solution $\zeta^*$ and $d_{j'}= \sum_{j \in \clientset} \loadjinc{j}{j'}$.
We call a cluster to be $sparse$ if $d_{j'} \le  \capacityU$ and $dense$ otherwise. Let $\csparse$ and $\cdense$ be the set of cluster centers of sparse  and dense clusters respectively.


\begin{lemma} \label{Sparse}
We can obtain a solution $<\hat{x}, \hat{y}>$ such that  exactly one facility $i(j')$ is opened integrally in each sparse cluster centered at $j' \in \clientset'$. The solution violates the lower bound by a factor of $(1 - 1/\ellone)$ and $\sum_{j' \in \clientset'}\sum_{i \in \bundle{j'}} [f_i \hat{y}_i + \sum_{j \in \clientset}  \hat{x}_{ij} \dist{i}{j}] \leq $
$\sum_{j' \in \csparse}[\frac{\ellone}{\ellone - 1}\sum_{i \in \bundle{j'}} f_i y^*_i + 4 \sum_{i \in \bundle{j'}} \sum_{j \in \clientset}  x^*_{ij} (\dist{i}{j} + \ellone \C{j})] + \sum_{j' \in \cdense} [\sum_{i \in \bundle{j'}} f_i y^*_i + \sum_{i \in \bundle{j'}} \sum_{j \in \clientset}  x^*_{ij} \dist{i}{j}]$.


\end{lemma}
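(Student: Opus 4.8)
The plan is to handle each sparse cluster $\bundle{j'}$ independently and open exactly one facility in it, paying for this via an averaging argument over the fractional mass $y^*$ inside the cluster. First I would restrict attention to a single sparse cluster centered at $j' \in \csparse$. Within $\bundle{j'}$ the total fractional opening is $\sum_{i \in \bundle{j'}} y^*_i \ge \cardTwo{\ballofj{j'}} \ge 1 - 1/\ellone$ (since all facilities in $\ballofj{j'}$ are closest to $j'$ among cluster centers, by the separation property they lie in $\bundle{j'}$), so scaling the $y^*_i$ for $i \in \bundle{j'}$ up by the factor $\tfrac{\ellone}{\ellone-1} \le \tfrac{1}{\sum_{i \in \bundle{j'}} y^*_i}$ produces a probability-like distribution on the facilities of the cluster. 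I would then select the single facility $i(j') \in \bundle{j'}$ to open as the one minimizing a suitable cost expression (a standard ``pick the cheapest in the support'' argument), set $\hat{y}_{i(j')} = 1$, and reassign to $i(j')$ all the demand $d_{j'}$ that $\zeta^*$ routed into the cluster.

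The second step is bounding the connection cost of this reassignment. A client $j$ contributing mass $\loadjinc{j}{j'}$ to the cluster is now served by $i(j')$; I bound $\dist{i(j')}{j}$ by $\dist{i(j')}{j'} + \dist{j'}{j}$, and then use Lemma \ref{distBound} to control both pieces: for any $i$ in the support of the cluster, $\dist{i}{j'} \le \dist{i}{j} + 2\ellone\C{j}$, and $\dist{j'}{j} \le 2\dist{i}{j} + 2\ellone\C{j}$. Averaging $\dist{i}{j'}$ over the scaled distribution (weighted by $x^*_{ij}/\loadjinc{j}{j'}$-type weights) and combining gives a per-unit-of-demand connection cost of at most roughly $4(\dist{i}{j} + \ellone\C{j})$ summed against $x^*_{ij}$, which matches the $4 \sum_{i \in \bundle{j'}}\sum_{j} x^*_{ij}(\dist{i}{j} + \ellone\C{j})$ term in the claimed bound. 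The facility cost is immediate: the opened facility $i(j')$ costs $f_{i(j')} \le \tfrac{\ellone}{\ellone-1}\sum_{i \in \bundle{j'}} f_i y^*_i$ by the minimality/averaging choice.

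The lower-bound violation is verified as follows: the demand routed to $i(j')$ is $d_{j'} = \sum_{j} \loadjinc{j}{j'} = \sum_{i \in \bundle{j'}}\sum_{j} x^*_{ij} \ge \sum_{i \in \bundle{j'}} \B y^*_i \ge \B(1 - 1/\ellone)$ using constraint \eqref{LPFLP_const3} and $\sum_{i \in \bundle{j'}} y^*_i \ge 1 - 1/\ellone$; since the cluster is sparse, $d_{j'} \le \capacityU$, so the upper bound is respected exactly while the lower bound is met up to the factor $(1 - 1/\ellone)$. For the dense clusters I would simply leave the fractional solution $\zeta^*$ untouched inside those clusters (contributing the unmodified $\sum f_i y^*_i + \sum x^*_{ij}\dist{i}{j}$ terms), which is why they appear without blow-up in the stated inequality. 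Summing the per-cluster bounds over all $j' \in \clientset'$ gives the lemma. The main obstacle I anticipate is the careful bookkeeping in the averaging step for the connection cost — in particular making sure the weights used to average $\dist{i}{j'}$ over $i \in \bundle{j'}$ are consistent with both the $x^*_{ij}$ mass of client $j$ and the scaled $\hat{y}$ distribution, so that the triangle-inequality bounds from Lemma \ref{distBound} compose to exactly the constant $4$ rather than something larger.
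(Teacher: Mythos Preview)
Your overall framework is right --- leave dense clusters untouched, open one facility per sparse cluster, and charge the opening cost by averaging against the $y^*$-mass --- but the connection-cost step has a genuine gap, and it is exactly the point where the paper's argument differs from yours.

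You propose to choose $i(j')$ from the full cluster $\bundle{j'}$ via an averaging argument and then bound $\dist{i(j')}{j}$ using parts~(\ref{claim-1}) and~(\ref{claim-2}) of Lemma~\ref{distBound}. This does not go through. A facility in $\bundle{j'}$ can be arbitrarily far from $j'$ (membership in $\bundle{j'}$ only says $j'$ is the \emph{nearest} center, not a nearby one), so $\dist{i(j')}{j'}$ is not a priori controlled. Part~(\ref{claim-1}) gives $\dist{i}{j'} \le \dist{i}{j} + 2\ell\C{j}$, but this bound depends on the client $j$; the ``$x^*_{ij}/\loadjinc{j}{j'}$-type weights'' you suggest vary with $j$ and therefore cannot be used to select a \emph{single} facility $i(j')$ that is simultaneously good for every client routing mass into the cluster. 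Without an absolute bound on $\dist{i(j')}{j'}$, the triangle inequality $\dist{i(j')}{j} \le \dist{i(j')}{j'} + \dist{j'}{j}$ together with part~(\ref{claim-2}) alone will not deliver the constant~$4$.

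The paper closes this gap by (a) choosing $i(j')$ as the cheapest facility in the \emph{ball} $\ballofj{j'}$ rather than in the cluster --- this immediately gives $\dist{i(j')}{j'} \le \ell\C{j'}$, and the facility-cost averaging still works because $\cardTwo{\ballofj{j'}} \ge 1 - 1/\ell$ --- and (b) performing a case split on whether $\ell\C{j'} \le \dist{j'}{j}$: when yes, part~(\ref{claim-2}) suffices; when no, part~(\ref{claim-3}) gives $\C{j'} \le 2\C{j}$, whence $\dist{i(j')}{j} \le 2\ell\C{j'} \le 4\ell\C{j}$. Your proposal omits both the restriction to the ball and any use of part~(\ref{claim-3}), which is precisely the tool that handles clients $j$ lying close to~$j'$.
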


\begin{proof}
For $ j' \in \cdense, i \in \bundle{j'}, j \in \clientset$, set $\hat{y}_i = y^*_i,  \hat{x}_{ij} = x^*_{ij}$. Next,
let $j' \in \csparse$
and $i(j')$ be a cheapest (lowest facility opening cost) facility in $\ballofj{j'}$. We open $i(j')$ and transfer all the assignments in the cluster onto it (see Figure (\ref{sparse})),
i.e. set $\hat{y}_{i(j')}=1, \hat{x}_{i(j')j} = \sum_{i \in \bundle{j'}}x^*_{ij}$, and $ \hat{y}_i =0$, $\hat{x}_{ij} =0$ for $i \in \bundle{j'} \setminus \{i(j')\}$ and $j \in \clientset$. 
Since $\cardTwo{\ballofj{j'}} \ge (1-\frac{1}{\ellone})$ we have $\sum_{j \in \clientset}{} \hat{x}_{i(j')j} = \demandofj{j'} \ge (1-\frac{1}{\ellone})\B$.
Thus, the lower bound is violated at most by $(1-\frac{1}{\ellone})$ 
and the facility cost is bounded by $\frac{\ellone}{\ellone - 1} \sum_{i \in \bundle{j'}} f_i y^*_i$.

To bound the service cost, we will show that for $j' \in \csparse, \sum_{j \in \clientset} \loadjinc{j}{j'} \dist{i(j')}{j}  \leq 4 \sum_{i \in \bundle{j'}} \sum_{j \in \clientset}  x^*_{ij} (\dist{i}{j} + \ellone \C{j})$: 
since $i(j') \in \ballofj{j'}$, we have $\dist{i(j')}{j'} \le \ellone \C{j'}$. 
Thus, for $j \in \clientset$,
we have $\dist{i(j')}{j} \le \dist{j'}{j} + \dist{i(j')}{j'} \le  \dist{j'}{j} + \ellone \C{j'}$. If $\ellone \C{j'} \le \dist{j'}{j}$ then $\dist{i(j')}{j}  \le 2  \dist{j'}{j}  \leq 4 (\dist{i}{j} + \ellone \C{j})$ ($\forall~i \in \bundle{j'}$ by claim (\ref{claim-2}) of Lemma (\ref{distBound})), 
 else $\dist{i(j')}{j}  \leq 2 \ell \C{j'} \leq 4 \ellone \C{j}$, where the second inequality in the else part follows by claim (\ref{claim-3}) of Lemma (\ref{distBound}).
Thus, in either case $\dist{i(j')}{j} \leq 4 (\dist{i}{j} + \ellone \C{j})$ for all $i \in \bundle{j'}$.
Substituting $\phi(j, j') = \sum_{i \in \bundle{j'}} x^*_{ij}$ and summing over all $j \in \clientset$ 
we get the desired bound.
Thus, $\sum_{i \in \facilityset} (f_i \hat{y}_i + \sum_{j \in \clientset} \hat{x}_{ij} \dist{i}{j} ) \\
= \sum_{j' \in \csparse} \sum_{i \in \bundle{j'}} f_i \hat{y}_i + \sum_{j' \in \csparse}  \sum_{i \in \bundle{j'}} \sum_{j \in \clientset}  \hat{x}_{ij}  \dist{i(j')}{j}  + \sum_{j' \in \cdense} \sum_{i \in \bundle{j'}} (f_i \hat{y}_i + \sum_{j \in \clientset} \hat{x}_{ij} \dist{i}{j} ) \\
= \sum_{j' \in \csparse} \sum_{i \in \bundle{j'}} f_i \hat{y}_i + \sum_{j' \in \csparse} \sum_{j \in \clientset}  \loadjinc{j}{j'} \dist{i(j')}{j}  + \sum_{j' \in \cdense} \sum_{i \in \bundle{j'}} (f_i \hat{y}_i + \sum_{j \in \clientset} \hat{x}_{ij} \dist{i}{j} ) \\
\le \sum_{j' \in \csparse}[\frac{\ellone}{\ellone - 1}\sum_{i \in \bundle{j'}} f_i y^*_i + 4 \sum_{i \in \bundle{j'}} \sum_{j \in \clientset}  x^*_{ij} (\dist{i}{j} + \ellone \C{j})] + \sum_{j' \in \cdense} \sum_{i \in \bundle{j'}} [ f_i y^*_i + \sum_{j \in \clientset}  x^*_{ij} \dist{i}{j}]$
\end{proof}
 
   
    \begin{figure} 
    \centering
    \includegraphics[width=10cm]{ 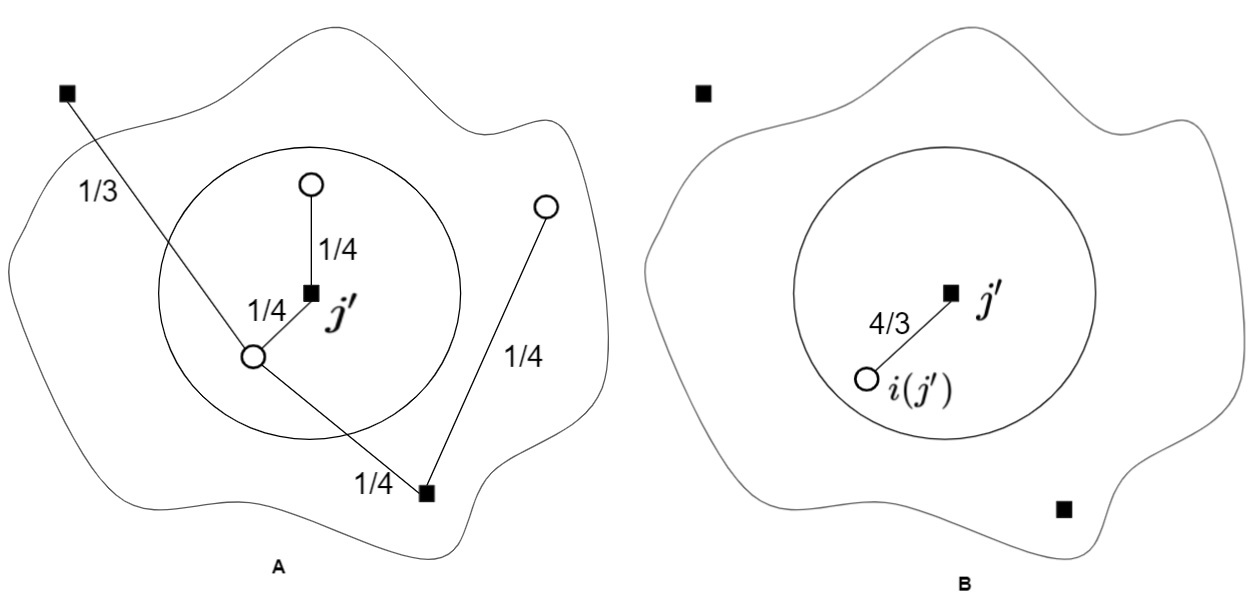}
    \caption{(A) Cluster $\bundle{j'}$ centered at $j'$. (B) Demand $\demandofj{j'}$ being assigned to facility $i(j') \in \ballofj{j'}$.}
    \label{sparse}
\end{figure}
 
To open the facilities integrally in the dense clusters, we consider the following LP for every dense cluster $\bundle{j'}$.

($LP_1$): 
\label{dense}
$Minimize ~\sum_{\singlefacility \in \bundle{j'}} (f_i +\capacityU \dist{i}{j'}) \zofi{i}$
\begin{eqnarray}
subject~ to &\capacityU \sum_{\singlefacility \in \bundle{j'}} \zofi{i} \geq \demandofj{j'} \label{denseLP_const1}\\
&\B \sum_{\singlefacility \in \bundle{j'}} \zofi{i} \leq \demandofj{j'} \label{denseLP_const1-1}\\ 
&0 \leq \zofi{i} \leq 1 \label{denseLP_const2}
\end{eqnarray}
   

Note that for $\zofi{i} = \sum_{j \in \clientset}x_{ij}^*/\capacityU$, we have $\demandofj{j'} = \capacityU \sum_{\singlefacility \in   \bundle{j'}} \zofi{i} \geq \B \sum_{\singlefacility \in \bundle{j'}} \zofi{i}$ and $\zofi{i} \le y^*_i$. Thus, $z$
is a feasible solution with cost at most $ \sum_{i \in \bundle{j'}} [\facilitycost y^*_i + \sum_{ j \in \clientset} x^*_{ij} ( \dist{i}{j} +  2\ellone \C{j}) ]$ by claim (\ref{claim-1}) of Lemma (\ref{distBound}).


\begin{lemma} \label{integral_dense}
Given the feasible solution $z$ to $LP_1$, an integral solution $\hat{z}$, that violates constraint (\ref{denseLP_const1}) and (\ref{denseLP_const1-1}) by a factor of 
$(2-1/\ell)$ 
and $(1-1/\ell)$ 
respectively, can be obtained at a loss of factor $\ell/(\ell-1)$ in cost
i.e. $ \sum_{\singlefacility \in \bundle{j'}} (f_i +\capacityU \dist{i}{j'}) \hat{z}_{i} \le \ell/(\ell-1) \sum_{\singlefacility \in \bundle{j'}} (f_i +\capacityU \dist{i}{j'}) \zofi{i}$.
\end{lemma}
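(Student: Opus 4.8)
The plan is to round $z$ to an integral $\hat z$ by opening a carefully chosen number of the cheapest facilities of $\bundle{j'}$, and then to argue that (a) the two capacity-type constraints hold up to the stated factors by the very choice of that number, and (b) the cost blows up by at most $\ell/(\ell-1)$ because we open the cheapest facilities. Put $w_i := f_i + \capacityU\,\dist{i}{j'} \ge 0$ for the objective coefficients of $LP_1$ and $\lambda := d_{j'}/\capacityU$; since $\bundle{j'}$ is a dense cluster we have $\lambda > 1$, and feasibility of $z$ (constraint (\ref{denseLP_const1})) gives $\sum_{i\in\bundle{j'}} z_i \ge \lambda$, whence $|\bundle{j'}| \ge \lceil\lambda\rceil$. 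Relabel the facilities of $\bundle{j'}$ as $1,\dots,m$ with $w_1 \le \cdots \le w_m$, set
\[
t := \left\lceil \frac{\lambda}{\,2-1/\ell\,} \right\rceil,
\]
and let $\hat z_i = 1$ for $i \le t$ and $\hat z_i = 0$ otherwise. Since $\lambda > 1$ and $2-1/\ell \ge 1$ we have $1 \le t \le \lceil\lambda\rceil \le m$, so $\hat z$ is well defined.

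For the relaxed constraints: by the choice of $t$, $(2-1/\ell)\,\capacityU\sum_i\hat z_i = (2-1/\ell)\,\capacityU\,t \ge \capacityU\lambda = d_{j'}$, i.e.\ (\ref{denseLP_const1}) holds with $\capacityU$ replaced by $(2-1/\ell)\capacityU$. For (\ref{denseLP_const1-1}) we need $(1-1/\ell)\,\B\,t \le d_{j'}$. If $\lambda \le 2-1/\ell$ then $t = 1$ and $(1-1/\ell)\B\,t \le \B \le \capacityU < d_{j'}$, the last inequality being density. If instead $\lambda > 2-1/\ell$, then $t < \frac{\lambda}{2-1/\ell}+1$, and using $\B \le \capacityU$ together with $(2-1/\ell)-(1-1/\ell) = 1$,
\[
(1-1/\ell)\,\B\,t \;<\; (1-1/\ell)\,\capacityU\left(\frac{\lambda}{2-1/\ell}+1\right) \;\le\; \capacityU\lambda \;=\; d_{j'},
\]
the middle inequality being equivalent to $\lambda \ge (1-1/\ell)(2-1/\ell)$, which holds since $\lambda > 2-1/\ell \ge (1-1/\ell)(2-1/\ell)$. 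Hence (\ref{denseLP_const1-1}) holds with $\B$ replaced by $(1-1/\ell)\B$.

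The main part is the cost bound $\sum_i w_i\hat z_i \le \frac{\ell}{\ell-1}\sum_i w_i z_i$. For $a \in [0,m]$ let $g(a) := \sum_{k \le \lfloor a\rfloor} w_k + (a-\lfloor a\rfloor)\,w_{\lceil a\rceil}$ denote the cost of the $a$ cheapest units of mass; $g$ is non-decreasing with $g(0)=0$. Then $\sum_i w_i\hat z_i = g(t)$, while $\sum_i w_i z_i \ge g(\lambda)$, since the $z_i$ form a $[0,1]$-weighting of total mass $\ge \lambda$ and any such weighting costs at least $g(\lambda)$. If $t \le \lambda$ we are done, as $g(t) \le g(\lambda) \le \sum_i w_i z_i$ and $\ell/(\ell-1) \ge 1$. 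Otherwise $t > \lambda$; because $\lambda > 1$ this forces $t \ge 2$, and since $t \le \lceil\lambda\rceil$ we also get $t-1 < \lambda$, so $t-1 < \lambda < t$. Consequently the units of mass beyond position $t-1$ all sit on facility $t$, giving $g(\lambda) \ge (\lambda-t+1)\,w_t$ and $g(t) = g(\lambda) + (t-\lambda)\,w_t$, hence (the case $g(\lambda)=0$ being trivial)
\[
\frac{g(t)}{g(\lambda)} \;\le\; \frac{1}{\lambda-t+1}.
\]
Finally, $t = \lceil\lambda/(2-1/\ell)\rceil$ implies $\lambda > (2-1/\ell)(t-1)$, and $(2-1/\ell)(t-1) - (t-1/\ell) = (t-2)(1-1/\ell) \ge 0$ for $t \ge 2$; thus $\lambda > t - 1/\ell$, i.e.\ $\lambda-t+1 > 1-1/\ell$, and so $g(t)/g(\lambda) < \ell/(\ell-1)$. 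Combining the two cases yields $\sum_{i\in\bundle{j'}}(f_i+\capacityU\dist{i}{j'})\hat z_i \le \frac{\ell}{\ell-1}\sum_{i\in\bundle{j'}}(f_i+\capacityU\dist{i}{j'})z_i$.

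I expect the only genuine obstacle to be the case $t > \lambda$ in the cost bound: there the naive scaling $g(t) \le \frac{t}{\lambda}\,g(\lambda)$ is unavailable because we open more facilities than the fractional mass, so one cannot simply rescale the LP cost. The resolution is to notice that rounding a value just above $\lambda$ up to $t=2$ can occur only when $\lambda$ already exceeds $2-1/\ell$, which forces $z$ to place more than $\lambda-1 > 1-1/\ell$ units of mass on the second-cheapest opened facility; that surplus on a sufficiently expensive facility is exactly what keeps $\sum_i w_i z_i$ large enough to absorb the rounding loss within the factor $\ell/(\ell-1)$. The remaining ingredients --- well-definedness of $\hat z$, the two relaxed constraints, and the branch $t \le \lambda$ --- are routine.
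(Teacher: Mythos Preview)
Your proof is correct, but it takes a genuinely different route from the paper's. The paper first \emph{compacts} $z$ greedily onto the cheapest facilities (in order of $w_i=f_i+\capacityU\,c(i,j')$) to obtain an almost-integral $z'$ with the same total mass and no larger cost, and then rounds the single fractional coordinate $z'_{i'}$: if $z'_{i'}\le 1-1/\ell$ it sets $\hat z_{i'}=0$, otherwise $\hat z_{i'}=1$. The two relaxed constraints and the $\ell/(\ell-1)$ cost blow-up then follow from inspecting what happens to one variable. By contrast, you skip the compaction step entirely and \emph{prescribe} the number of opened facilities as $t=\lceil \lambda/(2-1/\ell)\rceil$ with $\lambda=d_{j'}/\capacityU$, opening the $t$ cheapest. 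Your construction typically opens fewer facilities than the paper's (e.g.\ for $\ell=2$, $\lambda=10$ you open $7$ whereas the paper opens $10$), and your argument works for \emph{any} feasible $z$ rather than the specific one with $\capacityU\sum_i z_i=d_{j'}$ that the paper relies on. The price you pay is a more delicate cost analysis: the paper's cost bound is a one-line case split, whereas you need the prefix-cost function $g$ and the observation that $t>\lambda$ forces $\lambda>t-1/\ell$ via $t\ge 2$. Both arguments are clean; yours is slightly more general and yields a tighter integral solution, the paper's is shorter because the rounding touches only one coordinate.
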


\begin{proof}

We say that a solution to $LP_1$ is almost integral if it has at most one fractionally opened facility in $\bundle{j'}$.
We first obtain an almost integral solution
$\primezofi{}$ by arranging the facilities opened in $z$, in non-decreasing order of $\facilitycost  + \dist{i}{j'} \capacityU$ and greedily transferring the openings $z$ onto them. 
Then,  $\B \sum_{i \in \bundle{j'}} \primezofi{i} \leq \capacityU \sum_{i \in \bundle{j'}} \primezofi{i} = \capacityU \sum_{i \in \bundle{j'}} \zofi{i} = \demandofj{j'}$. 
Note that the cost of solution $\primezofi{}$ is no more than that of solution $z$.

We now convert the almost integral solution $\primezofi{}$ to an integral solution $\hat{z}$. Let $\hat{z} = \primezofi{}$ initially. Let $i'$ be the fractionally opened facility, if any, in $\bundle{j'}$. 
Consider the following cases:

\begin{enumerate}
    \item $\primezofi{i'} \leq 1-1/\ell$ 
    : close $i'$ in $\hat{z}$. There must be at least one integrally opened facility, say $i (\ne i') \in \bundle{j'}$ in $\primezofi{}$, as $\demandofj{j'} \ge \capacityU$. Then, $\demandofj{j'} = \capacityU \sum_{k \in \bundle{j'} \setminus \{i, i'\}} \primezofi{k} + \capacityU  (\primezofi{i} + \primezofi{i'}) \le \capacityU \sum_{k \in \bundle{j'} \setminus \{i, i'\}} \primezofi{k}  + 
    (2-1/\ell) \capacityU  \primezofi{i} \le 
    (2-1/\ell) \capacityU \sum_{k \in \bundle{j'} \setminus \{i'\}} \primezofi{k} = 
    (2-1/\ell) \capacityU \sum_{k \in \bundle{j'}} \hat{z}_k$. There is no increase in cost as we have only (possibly) shut down one of the facilities. Also, $\demandofj{j'} \geq \B \sum_{i \in \bundle{j'}} \primezofi{i} > \B \sum_{i \in \bundle{j'}} \hat{z}_{i}$ as $\sum_{i \in \bundle{j'}} \primezofi{i} > \sum_{i \in \bundle{j'}} \hat{z}_{i}$.

    \item $\primezofi{i'} 
    > 1-1/\ell$ 
    : Open $i'$ integrally at a loss of factor 
    $(\ell/\ell-1)$ in facility opening cost i.e., $\sum_{i \in \bundle{j'}} f_i \hat{z}_i \le 
    (\ell/\ell-1) \sum_{i \in \bundle{j'}} f_i \primezofi{i}$ and $(1-1/\ell)$ factor in lower bound, i.e., $\demandofj{j'} \geq \B \sum_{k \in \bundle{j'}} \primezofi{k} \geq \B \frac{\ell-1}{\ell} \sum_{k \in \bundle{j'}} \hat{z}_{k}$,  where the second inequality follows because $\sum_{i \in \bundle{j'}} \hat{z}_i \leq \frac{\ell}{\ell-1}\sum_{i \in \bundle{j'}} z'_i$. Also,  $\demandofj{j'} = \capacityU \sum_{i \in \bundle{j'}} \primezofi{i} < \capacityU \sum_{i \in \bundle{j'}} \hat{z}_{i}$ as $\sum_{i \in \bundle{j'}} \primezofi{i} < \sum_{i \in \bundle{j'}} \hat{z}_{i}$.
\end{enumerate}\end{proof}



Next, we define our assignments, possibly fractional, in the dense clusters. For $j' \in \cdense$, we distribute $d_{j'}$ equally to the facilities opened in $ \hat{z}$. Let $l_i$ be the amount assigned to  facility $i$ under this distribution. Then, $l_i = \hat{z}_i \frac{\demandofj{j'}}{\sum_{i \in \bundle{j'}} \hat{z}_i} \leq \hat{z}_i \frac{
(2-1/\ell)\capacityU \sum_{i \in \bundle{j'}} \hat{z}_i}{\sum_{i \in \bundle{j'}} \hat{z}_i } = 
(2-1/\ell) \capacityU \hat{z}_i$, where the first inequality follows by Lemma (\ref{integral_dense}). Also, $l_i  = \hat{z}_i \frac{\demandofj{j'}}{\sum_{i \in \bundle{j'}} \hat{z}_i} \geq \hat{z}_i \frac{\demandofj{j'}}{\frac{\ell}{\ell-1}\sum_{i \in \bundle{j'}} z'_i} = \hat{z}_i \frac{\demandofj{j'}}{\frac{\ell}{\ell-1}\sum_{i \in \bundle{j'}} z_i} \geq  \hat{z}_i \frac{\B \sum_{i \in \bundle{j'}} z_i}{\frac{\ell}{\ell-1}\sum_{i \in \bundle{j'}} z_i} = \frac{\ell-1}{\ell} \B \hat{z}_i$, where the first inequality follows because $\sum_{i \in \bundle{j'}} \hat{z}_i \leq \frac{\ell}{\ell-1}\sum_{i \in \bundle{j'}} z'_i$.

A solution is said to be an integrally opened solution if all facilities in it are opened to an extent of either $1$ or $0$. We next obtain such a solution to LBUBFL problem.

\begin{lemma} \label{Dense1}
We can obtain an integrally opened solution $<\bar x, \bar y>$ to LBUBFL problem at a loss of 
$(2-1/\ell)$ in upper bounds and $(1-1/\ell)$ in lower bounds whose cost is bounded by 
$(10\ellone + 4)LP_{opt}$.
\end{lemma}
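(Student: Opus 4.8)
The plan is to assemble the integrally opened solution $\langle \bar x, \bar y\rangle$ by combining the three ingredients already established: the handling of sparse clusters from Lemma \ref{Sparse}, the integral opening $\hat z$ in each dense cluster from Lemma \ref{integral_dense}, and the equalized load distribution $l_i$ defined just before the statement. First I would set, for every dense cluster center $j' \in \cdense$, $\bar y_i = \hat z_i$ for $i \in \bundle{j'}$, and fractionally assign the mass $d_{j'}$ to the opened facilities in proportion to $\hat z_i$, so that facility $i$ receives load $l_i$. By the two displayed inequalities preceding the lemma, $\frac{\ell-1}{\ell}\B\,\hat z_i \le l_i \le (2-\frac1\ell)\capacityU\,\hat z_i$, which is exactly the claimed violation of $(1-\frac1\ell)$ in the lower bound and $(2-\frac1\ell)$ in the upper bound at every opened facility. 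For sparse clusters I would simply inherit $\langle\hat x,\hat y\rangle$ from Lemma \ref{Sparse}, where one facility is opened integrally and the lower bound is violated only by $(1-\frac1\ell)$ (and the upper bound not at all, since $d_{j'}\le\capacityU$ there).

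Next I would bound the cost. For sparse clusters, Lemma \ref{Sparse} already gives the bound $\frac{\ell}{\ell-1}\sum_{i\in\bundle{j'}} f_i y^*_i + 4\sum_{i\in\bundle{j'}}\sum_{j\in\clientset} x^*_{ij}(\dist{i}{j} + \ell\C{j})$, which is at most a constant-times-$\ell$ multiple of the LP contribution of those clusters. For dense clusters, the facility-plus-transport cost is controlled by $LP_1$: its objective $\sum_{i\in\bundle{j'}}(f_i + \capacityU\dist{i}{j'})\hat z_i$ is, by Lemma \ref{integral_dense}, at most $\frac{\ell}{\ell-1}$ times the cost of the feasible solution $z$, which in turn was shown to be at most $\sum_{i\in\bundle{j'}}[f_i y^*_i + \sum_{j\in\clientset} x^*_{ij}(\dist{i}{j} + 2\ell\C{j})]$. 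The key remaining step is to convert this "$\capacityU\dist{i}{j'}$-per-unit-opened" bound into an actual service cost for the $d_{j'}$ units routed: since each opened facility $i$ carries $l_i \le (2-\frac1\ell)\capacityU\hat z_i$ units and each such unit, when it reaches $j'$ and then is redirected to a client $j$, costs at most $\dist{i}{j'} + \dist{j'}{j}$, and the $\dist{j'}{j}$ portion is charged (via Lemma \ref{distBound}, as in the sparse analysis) against $x^*_{ij}(\dist{i}{j}+\ell\C{j})$. Summing $l_i\,\dist{i}{j'} \le (2-\frac1\ell)\capacityU\hat z_i\dist{i}{j'}$ over $i$ brings back exactly the $LP_1$ objective up to the constant $(2-\frac1\ell)$, so the whole dense-cluster cost telescopes into a constant-times-$\ell$ multiple of $\sum_{i\in\bundle{j'}}[f_i y^*_i + \sum_j x^*_{ij}(\dist{i}{j}+2\ell\C{j})]$.

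Finally I would add the two parts. Since the clusters $\{\bundle{j'} : j'\in\clientset'\}$ partition $\facilityset$ and $\sum_{i}\sum_j x^*_{ij}\C{j}$ and $\sum_i\sum_j x^*_{ij}\dist{i}{j}$ are each at most $LP_{opt}$ (the former because $\C{j}=\sum_i x^*_{ij}\dist{i}{j}$ and $\sum_i x^*_{ij}\le$ a small constant after sparsification, more precisely $\sum_j\C{j}(\sum_i x^*_{ij}) \le \sum_j\sum_i x^*_{ij}\dist{i}{j} \cdot O(1)$), and $\sum_i f_i y^*_i \le LP_{opt}$, the total is bounded by $(10\ell+4)LP_{opt}$ after collecting the constants: roughly $\frac{\ell}{\ell-1}$ on facility costs, $4$ and $4\ell$ from the sparse transport terms, and $(2-\frac1\ell)\cdot\frac{\ell}{\ell-1}\cdot 2\ell$-type terms from the dense transport. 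I expect the main obstacle to be precisely this constant bookkeeping — making sure the per-cluster bounds, once multiplied by the load-distribution factor $(2-\frac1\ell)$ and the rounding factor $\frac{\ell}{\ell-1}$, still sum to $10\ell+4$ rather than something larger — together with the careful routing argument that turns the "$\capacityU\dist{i}{j'}$ per opened unit" quantity in $LP_1$ into a genuine per-client service cost using the triangle-inequality bounds of Lemma \ref{distBound}.
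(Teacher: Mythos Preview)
Your proposal is essentially correct and follows the same strategy as the paper: inherit $\langle\hat x,\hat y\rangle$ on sparse clusters from Lemma~\ref{Sparse}, set $\bar y_i=\hat z_i$ on dense clusters with the load $l_i$ distributed proportionally (the paper makes this explicit via $\bar x_{ij}=\frac{l_i}{d_{j'}}\,\loadjinc{j}{j'}$), and bound the dense service cost by routing through $j'$ with the triangle inequality, using Lemma~\ref{distBound} for the $\dist{j}{j'}$ piece and the $LP_1$ objective for the $\sum_i l_i\,\dist{i}{j'}$ piece.

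The one place you diverge slightly is in bounding $\sum_i f_i\hat z_i+\sum_i l_i\,\dist{i}{j'}$: you apply Lemma~\ref{integral_dense} to the full $LP_1$ objective and then multiply by the load factor $(2-\tfrac1\ell)$, yielding the product $(2-\tfrac1\ell)\cdot\tfrac{\ell}{\ell-1}\le 3$. The paper instead passes through the almost-integral solution $z'$, bounding the facility part by $\tfrac{\ell}{\ell-1}$ and the transport part by $(2-\tfrac1\ell)$ separately, and then takes $\max\{\tfrac{\ell}{\ell-1},\,2-\tfrac1\ell\}=2$ before using cost$(z')\le$ cost$(z)$. Your route is arguably cleaner (it uses Lemma~\ref{integral_dense} exactly as stated), and the resulting dense constants still fit comfortably under $(10\ell+4)\,LP_{opt}$. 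Two small corrections: the $\dist{j}{j'}$ portion is charged via claim~(\ref{claim-2}) of Lemma~\ref{distBound} to $x^*_{ij}(2\dist{i}{j}+2\ell\C{j})$, not $x^*_{ij}(\dist{i}{j}+\ell\C{j})$; and your justification that $\sum_{i,j}x^*_{ij}\C{j}\le LP_{opt}$ is muddled---sparsification plays no role here. The paper simply asserts this bound, implicitly relying on $\sum_i x^*_{ij}=1$ in the optimal LP solution.
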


\begin{proof}
For $j' \in \csparse$, set $\bar y_i = \hat{y}_i,~\bar{x}_{ij} = \hat{x}_{ij}~\forall i \in \bundle{j'}, j \in \clientset$. The violation in lower bound and the following cost bound then follows from Lemma (\ref{Sparse}).

\begin{equation}
\label{cost-of-sparse}
\sum_{j' \in \csparse} \sum_{i \in \bundle{j'}} (f_i \bar{y}_i + \sum_{j \in \clientset} \bar{x}_{ij} \dist{i}{j}) \le \sum_{j' \in \csparse}[\frac{\ellone}{\ellone - 1}\sum_{i \in \bundle{j'}} f_i y^*_i + 4 \sum_{i \in \bundle{j'}} \sum_{j \in \clientset}  x^*_{ij} (\dist{i}{j} + \ellone \C{j})].
\end{equation}

Next let $j' \in \cdense,~i \in \bundle{j'}, j \in \clientset$. Set $\bar y_i = \hat{z}_i,~\bar{x}_{ij} = \frac{l_i}{\demandofj{j'}} \loadjinc{j}{j'}$. 
Consider $ \sum_{j \in \clientset} \bar{x}_{ij} = \sum_{j \in \clientset} \frac{l_i}{\demandofj{j'}} \loadjinc{j}{j'} = l_i \leq   
(2-1/\ell)\capacityU \hat{z}_i = 
(2-1/\ell)\capacityU \bar{y}_i$. Also, $\sum_{j \in \clientset} \bar{x}_{ij} = l_i \geq   
(1-1/\ell)\B \hat{z}_i = (1-1/\ell)\B \bar{y}_i $. Thus, the loss in upper bound and lower bound is at most 
$(2-1/\ell)$ and $(1-1/\ell)$ respectively.


Also,\\ 
$\sum_{\singlefacility \in \bundle{j'}} (f_i \bar{y}_i +  \sum_{j \in \clientset} \dist{i}{j} \bar{x}_{ij}) $\\
$= \sum_{\singlefacility \in \bundle{j'}} f_i \hat{z}_i +  \sum_{\singlefacility \in \bundle{j'}}  \sum_{j \in \clientset} \dist{i}{j}  \frac{l_i}{\demandofj{j'}} \loadjinc{j}{j'}$\\
$ \leq \sum_{\singlefacility \in \bundle{j'}} f_i \hat{z}_i +  \sum_{\singlefacility \in \bundle{j'}}  \sum_{j \in \clientset} \frac{l_i}{\demandofj{j'}} \loadjinc{j}{j'} (\dist{j}{j'} + \dist{i}{j'})$ (by triangle inequality)\\ 
$ = \sum_{\singlefacility \in \bundle{j'}} f_i \hat{z}_i +  \sum_{\singlefacility \in \bundle{j'}}  l_i \dist{i}{j'} +   
\sum_{j \in \clientset}  \loadjinc{j}{j'} \dist{j}{j'} $ (as $ \sum_{\singlefacility \in \bundle{j'}} l_i = \sum_{j \in \clientset} \loadjinc{j}{j'}  = \demandofj{j'}$)\\
$ \leq  
\ell/(\ell-1) \sum_{\singlefacility \in \bundle{j'}} f_i \primezofi{i} +    
(2-1/\ell) \capacityU \sum_{\singlefacility \in \bundle{j'}} \dist{i}{j'} \primezofi{i} +  
\sum_{j \in \clientset}  \loadjinc{j}{j'} \dist{j}{j'} $ \\ 
$\leq  max\{\ell/(\ell-1), (2-1/\ell)\} \sum_{\singlefacility \in \bundle{j'}} (f_i  +  \capacityU \dist{i}{j'} )\primezofi{i} + 
\sum_{j \in \clientset}  \loadjinc{j}{j'} \dist{j}{j'}$ \\
$\leq  
2  \sum_{\singlefacility \in \bundle{j'}} (f_i  +  \capacityU \dist{i}{j'} )\primezofi{i} + \sum_{j \in \clientset}  \loadjinc{j}{j'} \dist{j}{j'}$ \\ 

$\leq 
2  \sum_{\singlefacility \in \bundle{j'}} (f_i  +  \capacityU \dist{i}{j'} ) z_i + \sum_{j \in \clientset}  \loadjinc{j}{j'} \dist{j}{j'}$ (by Lemma (\ref{integral_dense}))\\
$\leq  
2 \sum_{\singlefacility \in \bundle{j'}} f_i y^*_i  +   
2  \sum_{\singlefacility \in \bundle{j'}} \sum_{j \in \clientset} \dist{i}{j'} x^*_{ij} + \sum_{j \in \clientset}  \loadjinc{j}{j'} \dist{j}{j'}$ (because $z_i = \sum_{j \in \clientset} x^*_{ij} / \capacityU \leq y^*_i$)\\ 
$\leq  
2  \sum_{\singlefacility \in \bundle{j'}} f_i y^*_i  +  
2 \sum_{\singlefacility \in \bundle{j'}} \sum_{j \in \clientset}  x^*_{ij} (\dist{i}{ j} + 2\ell \C{j}) + \sum_{j \in \clientset}  \sum_{i \in \bundle{j'}} x^*_{ij}  (2\dist{i}{j} + 2\ell\C{j} )$ 
(by definition of $\loadjinc{j}{j'}$ and claims (\ref{claim-1}) and (\ref{claim-2}) of Lemma (\ref{distBound}))\\
$= 
2  \sum_{\singlefacility \in \bundle{j'}} f_i y^*_i  +   
4   \sum_{\singlefacility \in \bundle{j'}} \sum_{j \in \clientset}  x^*_{ij} \dist{i}{ j} +   
6\ell \sum_{j \in \clientset}  \sum_{i \in \bundle{j'}} x^*_{ij}   \C{j} $ \\

Summing over all $j' \in \cdense$ and adding inequality (\ref{cost-of-sparse}), we get\\
$ \sum_{i \in \facilityset} (f_i \bar{y}_i +  \sum_{j \in \clientset} \dist{i}{j} \bar{x}_{ij})$\\
$\leq max\{2, \ell/(\ell-1)\} \sum_{i \in \facilityset}  f_i y^*_i + 
4 \sum_{i \in \facilityset}\sum_{j \in \clientset}  x^*_{ij} \dist{i}{j} + 
10 \ell \sum_{i \in \facilityset}\sum_{j \in \clientset} x^*_{ij} \C{j}$\\
  
$ \leq 2 \sum_{i \in \facilityset}  f_i y^*_i + 4\sum_{i \in \facilityset}  \sum_{j \in \clientset}  x^*_{ij} \dist{i}{j} + 10\ell \sum_{i \in \facilityset}  \sum_{j \in \clientset}  x^*_{ij} \C{j}$ (for $\ell \geq 2$)\\
$ \leq 
(10 \ell+4)LP_{opt}$.
\end{proof}

Theorem~\ref{TricriteriaResult} is obtained by solving a min-cost flow problem with relaxed lower  and upper bounds to obtain the integral assignments.


For $\ell = 2.01$, we get $\alpha > 1/2$, $\beta$  slightly more than $3/2$ and the approximation ratio less than $25$. $\beta$ can be reduced to $3/2$ by a slight modification in obtaining an integral solution $\hat z$ from an almost integral solution $\primezofi{}$ in Lemma~(\ref{integral_dense}): instead of comparing $\primezofi{i'}$ with $(1 - 1/\ell)$, we compare it with $1/2$.

Let $S^t$ be the solution so obtained with $\alpha > 1/2$, $\beta = 3/2$ . 
Then, $Cost_I(S^t) \le 
O(1)Cost_I(O)$,
where $O$ is an optimal solution to $I$. 
%
	Next, using $S^t$, we transform the instance $I$ to an instance $I_{cap}$ of capacitated facility location problem by swapping the roles of clients and facilities. This is done via a series of transformations from instance $I$ to $I_1$, 	$I_1$ to $I_2$ and $I_2$ to $I_{cap}$. The key idea is to create  $\B - n_i$ units of demand, where the number $n_i$ of clients served by facility $i$ in $S^t$ is short of the lower bound and create  $n_i - \B$ units of supply at locations where $n_i > \B$.

\section{Instance $I_1$ and $I_2$}
\label{instI1}

In this section, we first transform instance $I$ to instance $I_1$ ($\facilityset,~\clientset,~f^1,~c,~\B,~\capacityU$) of LBUBFL by moving the clients to the facilities serving them in the tri-criteria solution $S^t$ and then transform $I_1$ to an instance $I_2$ of LBFL by removing the facilities not opened in $S^t$.
Recall that for a client $j$,  $\sigma^t(j)$ is the facility  in $\facilityset^t$ serving $j$. For $i \in \facilityset^t$, let $n_i$ be the number of clients served by $i$ in $S^t$, i.e., $n_i = |(\sigma^t)^{-1}(i)|$ and for $i \notin \facilityset^t$, $n_i = 0$. Move these clients to $i$ (see Fig.~\ref{Mapping1}). Thus, there are $n_i$ clients co-located with $i$. In $I_1$, our facility set is $\facilityset$ and the clients are at their new locations. Facility opening costs in $I_1$ are modified as follows: $f^1_i = 0$ for $i \in \facilityset^t$ and $=f_i$ for $i \notin \facilityset^t$.

\begin{lemma}
Cost of optimal solution of $I_1$ is bounded by $Cost_I(S^t) + Cost_I(O)$.
\end{lemma}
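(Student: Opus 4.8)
The plan is to exhibit an explicit feasible solution of the instance $I_1$ whose cost is at most $Cost_I(S^t) + Cost_I(O)$; since the optimal solution $O_1$ of $I_1$ can only be cheaper, the bound follows. Recall that in $I_1$ every client $j$ has been relocated to the point $\sigma^t(j) \in \facilityset^t$, the facilities of $\facilityset^t$ are free, and all other facilities retain their original opening cost $f_i$.

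First I would describe the candidate solution for $I_1$. Open exactly the facilities that $O$ opens, and for each client $j$ assign it, in $I_1$, to the same facility $\sigma(j)$ that serves it in the optimal solution $O$ of $I$. The assignment is clearly feasible with respect to the lower and upper bounds: for every open facility $i$, the multiset $\sigma^{-1}(i)$ of clients it receives in this candidate solution of $I_1$ is exactly the multiset it receives in $O$, so $\B \le |\sigma^{-1}(i)| \le \capacityU$ holds because it held in $O$. The facility cost of this candidate solution is at most $\sum_{i \in \facilityset} f_i \, [\,i \text{ open in } O\,]$, which is in turn at most the facility cost of $O$ (it is exactly that cost minus the contributions of the now-free facilities in $\facilityset^t$, hence no larger).

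Next I would bound the connection cost. A client $j$ now sits at location $\sigma^t(j)$ and is connected to $\sigma(j)$, so in $I_1$ it pays $\dist{\sigma^t(j)}{\sigma(j)}$. By the triangle inequality in the original metric, $\dist{\sigma^t(j)}{\sigma(j)} \le \dist{\sigma^t(j)}{j} + \dist{j}{\sigma(j)}$. The first term is exactly the connection cost that $j$ pays in the tri-criteria solution $S^t$, and the second term is exactly the connection cost that $j$ pays in $O$. Summing over all clients $j \in \clientset$, the total connection cost of the candidate solution of $I_1$ is at most (connection cost of $S^t$) $+$ (connection cost of $O$). Adding the facility-cost bound from the previous paragraph yields $Cost_{I_1}(O_1) \le Cost_{I_1}(\text{candidate}) \le Cost_I(S^t) + Cost_I(O)$.

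There is essentially no hard step here: the argument is a direct application of the triangle inequality together with the observation that moving to the free facility set can only decrease the facility cost. The only point requiring a line of care is verifying feasibility of the bounds, which is immediate because the candidate solution reuses the assignment structure of $O$ verbatim; one should also note that facilities in $\facilityset^t$ that $O$ does not open remain closed, which is fine since closed facilities carry no lower-bound obligation. I would state the proof in three or four sentences along exactly these lines.
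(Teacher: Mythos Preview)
Your proposal is correct and follows essentially the same approach as the paper: construct a feasible solution to $I_1$ by opening exactly the facilities opened in $O$ and assigning each client to the same facility as in $O$, then bound the connection cost of each relocated client via the triangle inequality $\dist{\sigma^t(j)}{\sigma(j)} \le \dist{\sigma^t(j)}{j} + \dist{j}{\sigma(j)}$. Your write-up is in fact slightly more careful than the paper's, explicitly noting that the facility cost can only drop because facilities in $\facilityset^t$ are free in $I_1$.
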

\begin{proof}
We construct a feasible solution $S_1$ of $I_1$: open $i$ in $S_1$ iff it is opened in $O$. Assign $j$ to $i$ in $S_1$ iff it is assigned to $i$ in $O$. $S_1$ satisfies both the lower bound as well as the upper bound as $O$ does so. For a client $j$, let $\sigma^*(j)$ be the facility serving $j$ in $O$. Then, the cost $\dist{ \sigma^t(j)}{\sigma^*(j))}$ of serving $j$ from its new location, $\sigma^t(j)$ is bounded by $\dist{j}{\sigma^t(j)} +\dist{j}{\sigma^*(j)}$ (see Fig. \ref{lemma1}). Summing over all $j \in \clientset$ and adding the facility opening costs, we get the desired claim.

\begin{figure}[t]
    \centering
    \includegraphics[width=9cm]{ 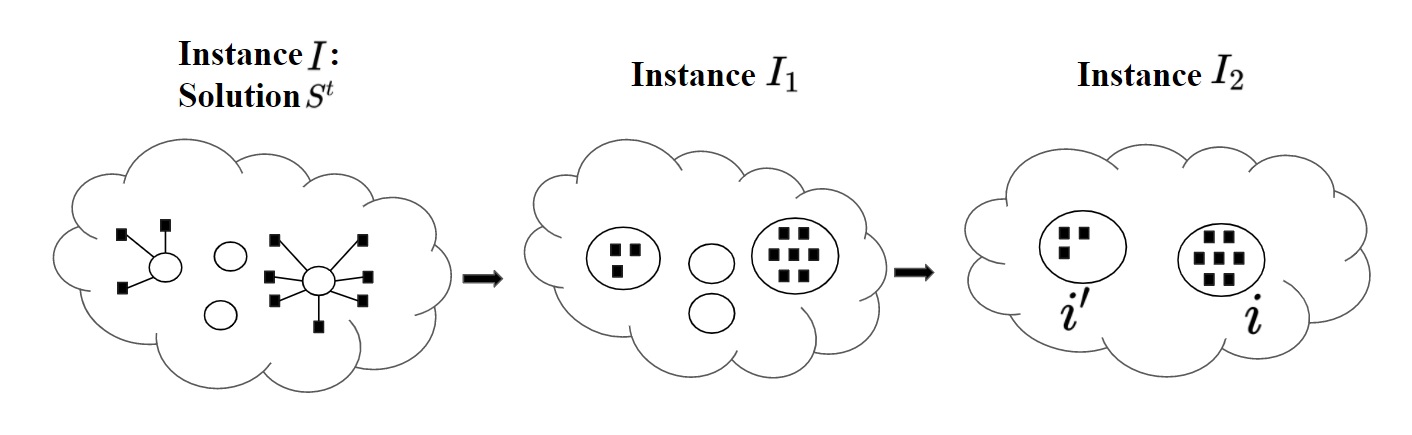}
    \caption{Transforming instance $I$ to $I_1$ and  $I_1$ to $I_2$. Solid squares represent clients and circles represent facilities. Clients are moved to the facilities serving them in $S^t$ while transforming $I$ to $I_1$. Facilities not opened in $S^t$ are dropped while going from $I_1$ to $I_2$.}
    \label{Mapping1}
\end{figure}

\begin{figure}[t]
    \centering
    \includegraphics[width=4cm]{ 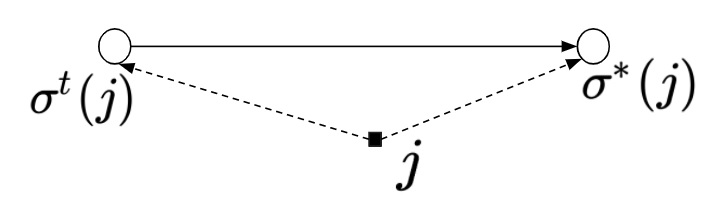}
    \caption{$c( \sigma^t(j), \sigma^*(j)) \leq c(j, \sigma^t(j)) + c(j, \sigma^*(j))$ }
    \label{lemma1}
\end{figure}
\end{proof}

    
   Next, we define an instance ~$I_2(\facilityset^t,~\clientset,~f^2,~c,~\B)$ of LBFL from $I_1$ by removing the facilities not in $\facilityset^t$ and ignoring the upper bounds. Instance $I_2$ is same as $I_1$ without the upper bounds, except that the facility set is $\facilityset^t$ now. Thus, $f^2_i = 0~\forall~i \in \facilityset^t$. 
  Let $O_1$ be an optimal solution of $I_1$. 
  For a client $j$, let $\sigma^1(j)$ be the facility serving $j$ in $O_1$.

\begin{lemma}
Cost of optimal solution of $I_2$ is bounded by $ 2Cost_{I_1}(O_1)$.
\end{lemma}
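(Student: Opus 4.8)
The plan is to exhibit an explicit feasible solution $S_2$ of $I_2$ whose cost is at most $2\,Cost_{I_1}(O_1)$; since $Cost_{I_2}(O_2) \le Cost_{I_2}(S_2)$, this suffices. I would build $S_2$ from $O_1$ exactly as sketched in the high level overview (Step 3): for a client $j$ with $\sigma^1(j) \in \facilityset^t$, keep the assignment, i.e.\ set $\sigma^2(j) = \sigma^1(j)$; for a client $j$ with $\sigma^1(j) = i \notin \facilityset^t$, reassign $j$ to the facility $i' \in \facilityset^t$ closest to $i$ (ties broken by the distinctness assumption) and open $i'$ in $S_2$. A facility of $\facilityset^t$ is declared open in $S_2$ exactly when it receives at least one client this way, and nothing else is open.

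Next I would check that $S_2$ is feasible for the LBFL instance $I_2$. Every client is served by construction. For the lower bound, take a facility $i' \in \facilityset^t$ open in $S_2$. Either $i'$ was already open in $O_1$, in which case every $O_1$-client $j$ of $i'$ (those with $\sigma^1(j) = i' \in \facilityset^t$) is still assigned to $i'$ in $S_2$, so $i'$ keeps its $\ge \B$ clients; or $i'$ became open only because some facility $i \notin \facilityset^t$ open in $O_1$ has $i'$ as its nearest $\facilityset^t$-facility, in which case $i'$ inherits all clients $O_1$ assigned to $i$, and $O_1$ respects the lower bound at $i$, so there are at least $\B$ of them. When several of these situations coincide (or several facilities outside $\facilityset^t$ route to the same $i'$) the count only increases, so $|(\sigma^2)^{-1}(i')| \ge \B$ in all cases. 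Since $I_2$ carries no upper bounds, $S_2$ is feasible.

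The cost comparison is the crux, and it rests on a location observation I would state explicitly: in $I_1$ (hence in $I_2$) each client $j$ is co-located with $\sigma^t(j) \in \facilityset^t$, so there is always a facility of $\facilityset^t$ at distance $0$ from $j$. For $j$ with $\sigma^1(j) = i \in \facilityset^t$ the connection cost is unchanged, $\dist{\sigma^2(j)}{j} = \dist{\sigma^1(j)}{j}$. For $j$ with $\sigma^1(j) = i \notin \facilityset^t$, the triangle inequality gives $\dist{i}{\sigma^t(j)} \le \dist{i}{j} + \dist{j}{\sigma^t(j)} = \dist{i}{j}$, so the nearest $\facilityset^t$-facility $i'$ to $i$ satisfies $\dist{i}{i'} \le \dist{i}{\sigma^t(j)} \le \dist{i}{j}$, whence $\dist{i'}{j} \le \dist{i'}{i} + \dist{i}{j} \le 2\dist{i}{j}$. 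Thus in every case $\dist{\sigma^2(j)}{j} \le 2\dist{\sigma^1(j)}{j}$. Since $f^2 \equiv 0$ on $\facilityset^t$, summing over all $j \in \clientset$ yields $Cost_{I_2}(S_2) \le 2\sum_{j \in \clientset} \dist{\sigma^1(j)}{j} \le 2\,Cost_{I_1}(O_1)$, and therefore $Cost_{I_2}(O_2) \le 2\,Cost_{I_1}(O_1)$.

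The only point needing care — the ``hard part'', such as it is — is making the co-location observation precise, since it is exactly what guarantees an open-able $\facilityset^t$-facility near $i$ for the rerouting; once that is in hand, the lower-bound feasibility check and the doubling of the connection cost are straightforward triangle-inequality arguments.
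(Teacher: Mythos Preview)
Your proposal is correct and follows essentially the same approach as the paper: construct $S_2$ from $O_1$ by rerouting each client served by some $i \notin \facilityset^t$ to the nearest facility $i' \in \facilityset^t$, then use the co-location of clients with $\sigma^t(j)$ together with the triangle inequality to bound the new connection cost by twice the old one. Your write-up is in fact more explicit than the paper's on two points---the lower-bound feasibility check and the reason why $\dist{i}{i'} \le \dist{i}{j}$---but the underlying argument is identical.
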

\begin{proof}
We construct a feasible solution $S_2$ to $I_2$: for $i \in \facilityset^t$, open $i$ in $S_2$ if it is opened in $O_1$ and assign a client $j$ to it if it is assigned to $i$ in $O_1$. For a facility $i \notin \facilityset^t$, 
open its nearest facility $i'$ (if not already opened) in $\facilityset^t$. Assign $j$ to $i'$ in $S_2$ if it is assigned to $i$ in $O_1$. $S_2$ satisfies the lower bound  as $O_1$ does so. Note that it need not satisfy the upper bounds. This is the reason we dropped upper bounds in $I_2$. The cost $c( \sigma^t(j), i')$
of serving $j$ from its new location is bounded by $c(\sigma^t(j), \sigma^1(j)) + c( \sigma^1(j), i') \le 2 c(\sigma^t(j), \sigma^1(j))$ (since $i'$ is closest in $\facilityset^t$ to $i =  \sigma^1(j)$). Summing over all $j \in \clientset$ we get the desired claim.
\end{proof}


\section{Instance $I_{cap}$ of Capacitated Facility Location Problem}
\label{instIcap}

 In this section, we create an  instance $I_{cap}$ of capacitated facility location problem from $I_2$. 
 The main idea is to create  a demand of $\B - n_i$ units at locations where the number of clients served by the facility is less than $\B$ and a supply of $n_i - \B$ units at locations with surplus clients. For each facility $i \in \facilityset^t$, let $l(i)$ be the distance of $i$ from a nearest facility  $i' \in \facilityset^t$, $i' \neq i$ and let $\delta$ be a constant to be chosen appropriately. A facility $i$ is called {\em small} if $0< n_i \leq \B$ and  {\em big} otherwise. A big facility $i$ is split into two co-located facilities $i_1$ and $i_2$.
  We also split the set of clients at $i$ into two sets:  arbitrarily, $\B$ of these clients are placed at $i_1$ and the remaining $n_i - \B$ clients at $i_2$ (see Fig.~\ref{Mapping2}). 
   Instance $I_{cap}$ is then defined as follows:
A {\em small} facility $i$ needs additional $\B - n_{i}$ clients to satisfy its lower bound; hence a demand of $\B - n_{i}$ is created at $i$ in the $I_{cap}$ instance. A facility with capacity $\B$ and facility opening cost $\delta n_{i} l(i)$ is also created at $i$. 
For a {\em big} facility $i$, correspondingly two co-located facilities $i_1$ and $i_2$ are created with capacities $\B$ and $n_i - \B$ respectively. The facility opening cost of $i_1$ is $\delta \B l(i)$ whereas $i_2$ is free. Intuitively, since the lower bound of a big facility is satisfied, it has some extra ($n_i - \B$) clients, which can be used to satisfy the demand of clients in $I_{cap}$ for free. 
 The second type of big facilities are called {\em free}. 
 We use $i$ to refer to both the client (with demand) as well as the facility located at $i$. Let $\bar \facilityset^t$ be the set of facilities so obtained. The set of clients and the set of facilities are both $\bar \facilityset^t$. Table~\ref{tab:my_label} summarizes the instance. Also see Fig.~\ref{Mapping2}
 
 \begin{figure}
    \centering
    \includegraphics[width=8cm]{ 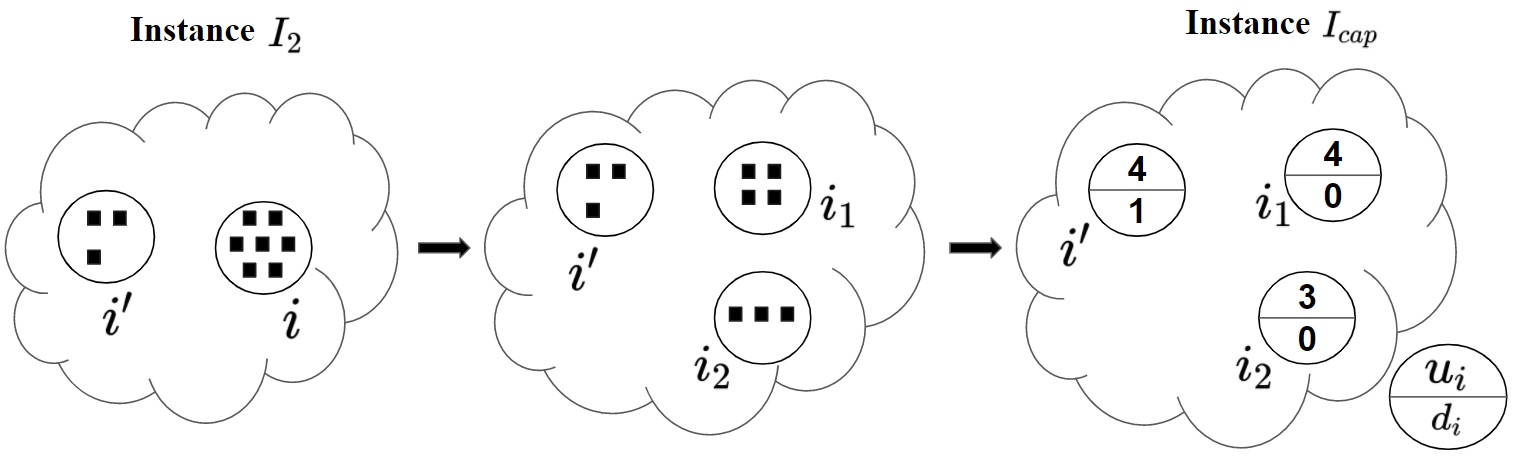}
    \caption{Transforming instance $I_2$ to $I_{cap}$: $\B = 4$,
    $i'$ is small, $i$ is big.
    $i$ is split into $i_1$ and $i_2$ while transforming $I_2$ to $I_{cap}$, $i_2$ is free. Demand at the small facility $i'$ is $\B - n_{i'} = 1$ and at $i_1$, $i_2$, it is $0$. Capacities of $i'$ and $i_1$ are $\B = 4$ and it is $n_{i} - \B = 3$ of the free facility $i_2$.}
    \label{Mapping2}
\end{figure}
 
 \begin{table}[t]
    \centering
    \begin{tabular}{|c|c|c|c|c|}
    \hline
       Type & $n_i$ & $u_i$ & $d_i$ & $f^t_i$ \\
        \hline
       {\em small} & $n_i \leq \B$ & $\B$ & $\B- n_i$& $\delta n_i l(i)$ \\
        \hline
       {\em big} & $n_i > \B$ & ($i_1$)$\B$ & $0$& $\delta \B l(i)$ \\
        & & ($i_2$) $n_i - \B$ & $0$& $0$ \\
       
          \hline 
    \end{tabular}
    \caption{Instance of $I_{cap}$: $d_i, u_i, f^t_i$ are demands, capacities and facility costs resp.}
    \label{tab:my_label}
\end{table}
 
 


\begin{lemma}
Let $O_2$ be an optimal solution of $I_2$. Then, cost of optimal solution to $I_{cap}$ is bounded by $(1+2\delta)Cost_{I_2}(O_2)$.
\end{lemma}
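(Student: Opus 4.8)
The plan is to exhibit an explicit feasible solution of the capacitated instance $I_{cap}$ whose cost is at most $(1+2\delta)\,Cost_{I_2}(O_2)$; since $I_{cap}$ is a minimisation problem this bounds its optimum. Everything is obtained by pushing $O_2$ through the reduction. Write $A\subseteq\facilityset^t$ for the facilities opened by $O_2$ and recall that each location $i\in\facilityset^t$ carries $n_i$ co‑located clients (the ones $S^t$ sent to $i$), that $n_i\ge\alpha\B>\B/2$ for every $i\in\facilityset^t$, and that $O_2$ pays only connection cost. Since $I_2$ has no upper bounds I first normalise $O_2$, without increasing its cost, so that every $i\in A$ keeps all of its own clients and imports clients from elsewhere only as far as needed to reach $\B$: repeatedly, if a client with home $h\in A$ is routed to $d\neq h$ while some client with home $h'\neq h$ is routed into $h$, swap them; by the triangle inequality this does not raise the cost, and at termination a small $i\in A$ (with $n_i\le\B$) keeps its $n_i$ clients and imports $\ge\B-n_i$ more, while a big $i\in A$ (with $n_i>\B$) keeps all $n_i$ and imports none. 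Thus in the normalised $O_2$ the only clients that move are the $n_i$ clients of the locations $i\notin A$, and every moving client travels a distance $\ge l(i)$ out of its home $i$ (it lands in $A\setminus\{i\}\subseteq\facilityset^t\setminus\{i\}$, and $l(i)$ is the distance from $i$ to the nearest other member of $\facilityset^t$).

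For the $I_{cap}$ solution I would open the free facility $i_2$ at every big location; for every $i\notin A$ open the facility at $i$ if $i$ is small (cost $\delta n_i l(i)$) and open $i_1$ if $i$ is big (cost $\delta\B l(i)$), opening $i_1$ only if it actually receives flow; open nothing extra at the locations of $A$. The demands are routed by mirroring the reverse of $O_2$: the demand $\B-n_i$ of a small $i\notin A$ is served from the facility at $i$ itself at zero cost; the demand $\B-n_i$ of a small $i\in A$ is served by taking the $\ge\B-n_i$ foreign clients $O_2$ sends into $i$ from homes $h$ (each such $h$ lies outside $A$, since members of $A$ export nothing after normalisation), choosing a subflow of exactly $\B-n_i$ of them, and serving each unit that came from $h$ from the facility at $h$ when $h$ is small and from $h_2$ (then $h_1$ once $h_2$ is full) when $h$ is big. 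Feasibility then follows by counting: at a small $h\notin A$ the incoming flow is its own demand $\B-n_h$ plus a subset of mirrored units, each tied to a distinct one of $h$'s $n_h$ exported clients, so the load is at most $(\B-n_h)+n_h=\B$, exactly its capacity; at a big $h\notin A$ the at most $n_h$ exported clients are absorbed by $h_2$'s capacity $n_h-\B$ and $h_1$'s capacity $\B$.

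For the cost, every unit of flow travels exactly the $O_2$‑distance $\dist{h}{d}$ of the client it mirrors, distinct units mirror distinct $O_2$‑clients, and the self‑served demands cost $0$, so the connection cost of the $I_{cap}$ solution is at most $Cost_{I_2}(O_2)$. The facility charge $\delta n_i l(i)$ of a small $i\notin A$ is paid for by the $n_i$ clients $O_2$ exports from $i$, each contributing at least $l(i)$ to $Cost_{I_2}(O_2)$, and the charge $\delta\B l(i)\le\delta n_i l(i)$ of a big $i\notin A$ the same way; different opened facilities have different homes, so these charges hit pairwise disjoint $O_2$‑journeys and the facility cost is at most $\delta\,Cost_{I_2}(O_2)$. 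The factor $2$ in the statement is slack absorbing the normalisation swaps and the routing corner cases. Adding the two parts would give a feasible $I_{cap}$ solution of cost at most $(1+2\delta)\,Cost_{I_2}(O_2)$, which proves the lemma.

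The step I expect to be the real obstacle is making the routing simultaneously feasible and cheap at every location at once: the subflows out of $O_2$ must be chosen so that each small facility opened in $I_{cap}$ receives at most $n_i$ mirrored units on top of its own $\B-n_i$ demand, and so that each big location's free half is exhausted before its paid half — this is exactly where $n_i>\B/2$ (that is, $\alpha>1/2$) and the inequality $n_i-\B<n_i$ keep the counting tight. The purely metric facts (every move crosses at least $l(i)$, the $i_2$'s cost nothing) are routine once the routing has been fixed.
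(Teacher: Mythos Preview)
Your construction is essentially the paper's: normalise $O_2$ so that open facilities keep their own clients, open the (non-free) $I_{cap}$ facilities precisely at the locations closed in $O_2$, open the free copy $i_2$ at big locations, and serve each small facility's demand by mirroring the reverse of the client flow in $O_2$; connection cost is then bounded by $Cost_{I_2}(O_2)$ and facility cost by $\delta\,Cost_{I_2}(O_2)$ via the $n_i\,l(i)$ charge to the journeys out of each closed $i$. One overstatement to fix: your swap only guarantees that no $h\in A$ simultaneously exports and imports, so a \emph{big} $h\in A$ may still export (your parenthetical ``each such $h$ lies outside $A$'' is false); this is harmless because such an $h$ then imports nothing, hence exports at most $n_h-\B$ clients, which is exactly the capacity of the free $h_2$ you already opened, so the routing and the cost bound survive unchanged. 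Also note that $\alpha>1/2$ plays no role in this lemma --- it is needed only later, in the tree-processing step --- so you can drop that remark.
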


\begin{proof}
We will construct a feasible solution $S_{cap}$ to $I_{cap}$ of bounded cost from $O_2$.
 As $O_2$ satisfies the lower bound,  we can assume wlog that if $i$ is opened in $O_2$ then it serves all of its clients if $n_i \le \B$ (before taking more clients from outside) and it serves at least $\B$ of its clients before sending out its clients to other (small) facilities otherwise. 
Also, if two big facilities are opened in $O_2$, one does not serve the clients of the other. Let $\rho^2(j_i, i')$ denote the number of clients co-located at $i$ and assigned to $i'$ in $O_2$ and $\rho^c(j_{i'}, i)$ denotes the amount of demand of $i'$ assigned to $i$ in $S_{cap}$.
 \begin{figure}[t]
    \centering
    \includegraphics[width=10cm]{ 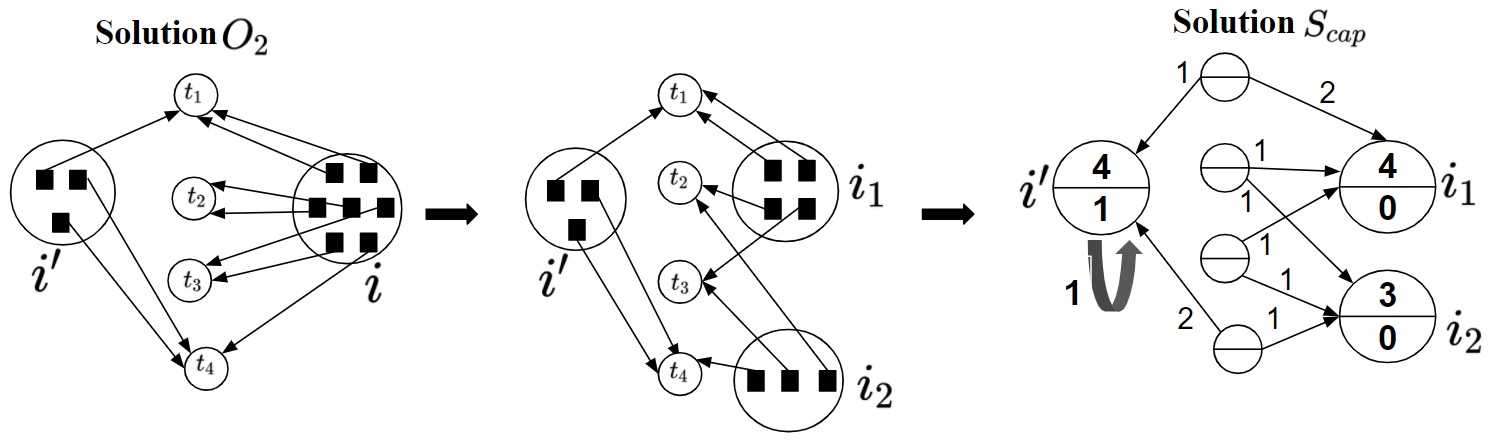}
    \caption{Feasible solution $S_{cap}$ to $I_{cap}$ from $O_2$: $i$ and $i'$ are closed in $O_2$. $\rho^2(j_{i_1}, t_1) = 2,~\rho^2(j_{i_1}, t_2) = 1,~\rho^2(j_{i_1}, t_3) = 1,~ \rho^2(j_{i_1}, t_4) = 0$ and $\rho^2(j_{i_2}, t_1) = 0,~\rho^2(j_{i_2}, t_2) = 1,~\rho^2(j_{i_2}, t_3) = 1,~ \rho^2(j_{i_2}, t_4) = 0$.  } 
    \label{Mapping}
\end{figure}

 \begin{figure}[t]
    \centering
    \includegraphics[width=8cm]{ 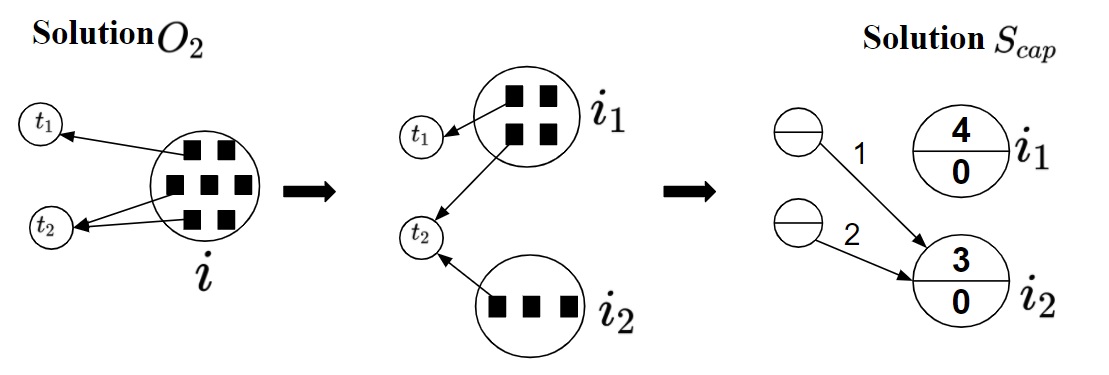}
   \caption{Feasible solution $S_{cap}$ to $I_{cap}$ from $O_2$: $i$ is open in $O_2$.   $\rho^c(j_{t_1}, i_2) = \rho^2(j_{i}, t_1) = 1, \rho^c(j_{t_2}, i_2) = \rho^2(j_{i}, t_2) = 2$,
   $i_2$ is open in $S_{cap}$. $t_1$ and $t_2$ must be small.}
    \label{O2_SCap2}
\end{figure}
    \begin{enumerate}
        \item   if $i$ is closed in $O_2$ then open $i$ if $i$ is small
        and open $(i_1 \& i_2)$ if it is big.
        
        Assignments are defined as follows (see Fig.~\ref{Mapping}):
        \begin{enumerate}
        
        \item If $i$ is small: Let $i$ serve its own demand. In addition, assign $\rho^2(j_i, i')$ demand of $i'$ to $i$ in our solution for small $i' \ne i$. Note that the same cannot be done when $i'$ is big as there is no demand at $i'_1$ and $i'_2$. Thus,  $\rho^c(j_{i'}, i) = \rho^2(j_i, i')$, for small $ i' \ne i$ and $\rho^c(j_i, i) = d_i$. Also, $\sum_{small~i' \ne i}\rho^c(j_{i'}, i) = \sum_{small~i' \ne i} \rho^2(j_i, i') \le n_i$.
        Hence, $\sum_{i'}\rho^c(j_{i'}, i) \le n_i + d_i = \B = u_i$. 
      
        
       \item If $i$ is big: assign $\rho^2(j_{i_1}, i')$ (/$\rho^2(j_{i_2}, i')$) demand of $i'$ to $i_1$(/$i_2$) in our solution for small $i' \ne i$. Thus,
        $\sum_{small~i' \ne i}\rho^c(j_{i'}, i_1) = \sum_{small~i' \ne i} \rho^2(j_{i_1}, i') \le \B = u_{i_1}$. Also,
         $\sum_{small~i' \ne i}\rho^c(j_{i'}, i_2) \\ = \sum_{small~i' \ne i} \rho^2(j_{i_2}, i') \le n_i - \B = u_{i_2}$.
     
       \end{enumerate}
     
     \item If $i$ is opened in $O_2$ and is big, open the free facility $i_2$ (see Fig.~\ref{O2_SCap2}): assign $\rho^2(j_{i}, i')$  demand of $i'$ to $i_2$ in our solution for small $i' \ne i$. Thus,
        $\sum_{i' \ne i}\rho^c(j_{i'}, i_2) = \sum_{i' \ne i} \rho^2(j_{i}, i') \le n_i - \B$ (the inequality holds by assumption on $O_2$) $= u_{i_2}$.
    \end{enumerate}
    
    Thus capacities are respected in each of the above cases. Next, we show that all the demands are satisfied. If a facility is opened in $S_{cap}$, it satisfies its own demand. 
Let $i$ be closed in $S_{cap}$. If $i$ is big ,we need not worry as $i_1$ and $i_2$ have no demand. So, let $i$ is small. Then, it must be opened in $O_2$. Then, 
$\sum_{i' \ne i}\rho^c(j_{i}, i') = $
$\sum_{small ~i' \ne i,  }\rho^c(j_{i}, i') + \sum_{i'_1: i' ~is ~big}\rho^c(j_{i}, i'_1) + 
\sum_{i'_2: i' ~is ~big}\rho^c(j_{i}, i'_2)= $
$\sum_{small ~i' \ne i,  }\rho^2(j_{i'}, i) + \sum_{i'_1 : i' ~is ~big} \rho^2(j_{i'_1}, i)  + \\ \sum_{i'_2 : i' ~is ~big}\rho^2(j_{i'_2}, i)= \sum_{small ~i' \ne i,  }\rho^2(j_{i'}, i) + \sum_{i' ~is ~big} \rho^2(j_{i'}, i) = \sum_{i' \ne i}\rho^2(j_{i'}, i)$
$\ge \B - n_i$  (since $O_2$ is a feasible solution of $I_2$) $= d_i$.

Next, we bound the cost of the solution. The connection cost is at most that of $O_2$. 
For facility costs, consider a facility $i$ that is opened in our solution and closed in $O_2$. Such a facility must have paid a cost of at least $n_i l(i)$ to get its clients served by other (opened) facilities in $O_2$. The facility cost paid by our solution is $ \delta min\{n_i, \B\} l(i) \le \delta n_i l(i)$. If $i$ is opened in $O_2$, then it must be serving at least $\B$ clients and hence paying a cost of at least $\B l(i)$ in $O_2$. In this case also, the facility cost paid by our solution is $ \delta min\{n_i, \B\} l(i) \le \delta \B l(i)$.  Summing over all $i$'s, we get 
that the facility cost is bounded by $2 \delta Cost_{I_2}(O_2)$ and the total cost is bounded by $(1 + 2 \delta) Cost_{I_2}(O_2)$. Factor $2$ comes because we may have counted an edge $(i, i')$ twice, once as a client when $i$ was closed and once as a facility when $i'$ was opened in $O_2$.  
\end{proof}


\section{Approximate Solution $AS_1$ to $I_1$ from approximate solution $AS_{cap}$ to  $I_{cap}$} \label{bi-criteria}

In this section,  we obtain
 a solution $AS_1$ to $I_1$ that violates the upper bounds by a factor of $(\beta + 1)$ without violating the lower bounds. This is the second major contribution of our work. We first obtain a ($5+\epsilon$)-approximate solution $AS_{cap}$ to $I_{cap}$ using approximation algorithm of Bansal \etal~\cite{Bansal}.  $AS_{cap}$ is then used to construct $AS_1$. Wlog assume that if a facility $i$ is opened in $AS_{cap}$ then it serves all its demand. (This is always feasible as $d_i \le u_i$.)  If this is not true, we can modify $AS_{cap}$  and obtain another solution, that satisfies the condition, of cost no more than that of $AS_{cap}$.
 $AS_1$ is obtained from $AS_{cap}$ by first defining the assignment of the clients and then opening the facilities that get at least $\B$ clients. Let $\rhobar^c(j_{i'}, i)$ denotes the amount of demand of $i'$ assigned to $i$ in $AS_{cap}$ and $\rhobar^1(j_i, i')$ denotes the number of clients co-located at $i$ and assigned to $i'$ in $AS_1$. 
Clients are assigned in three steps. In the first step, assign the clients co-located at a facility to itself. 
For small $i'$, additionally, we do the following  (type - $1$) re-assignments (see Fig. \ref{type1}): $(i)$ For small $i \ne i'$, assign $\rhobar^c(j_{i'}, i)$ clients co-located at $i$ to $i'$. Thus, $\rhobar^1(j_i, i') = \rhobar^c(j_{i'}, i)$ for small $i$. $(ii)$ For big $i$, assign $\rhobar^c(j_{i'}, i_1) + \rhobar^c(j_{i'}, i_2)$ clients co-located at $i$ to $i'$. Thus, $\rhobar^1(j_i, i') = \rhobar^c(j_{i'}, i_1) + \rhobar^c(j_{i'}, i_2)$ for big $i$. 
Claim~\ref{claim1} shows that these assignments are feasible and the capacities are violated only upto the extent to which they were violated by the tri-criteria solution $S^t$. 
\begin{figure}[t]
    \centering
    \includegraphics[width=8cm]{ 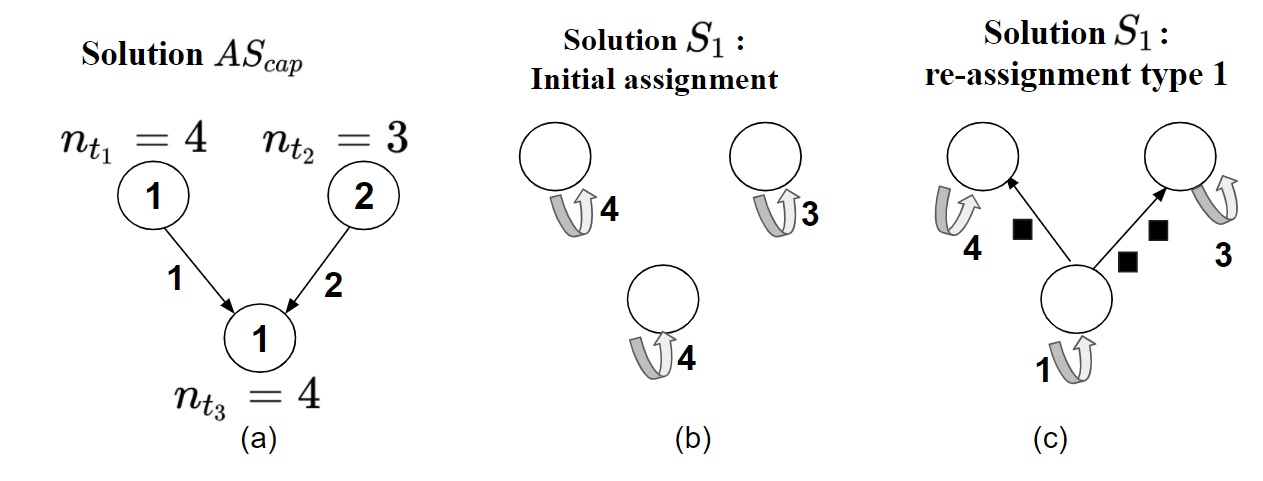}
    \caption{($a$) $\B =5$, $t_1, t_2, t_3$ have demands $1$, $2$ and $1$ unit each respectively. In solution $AS_{cap}$, $1$ unit of demand of $t_1$ and $2$ units of demand of $t_2$ are assigned to $t_3$. (b) Solution $S_1$: initially, $n_i$ clients are initially assigned to facility $i$. (c) Type $1$ reassignments: $1$ and $2$ clients of $t_3$ reassigned to $t_1$ and $t_2$ respectively. After this reassignment, $t_3$ has only $1$ client.}
    \label{type1}
\end{figure}

\begin{claim}
\label{claim1}
    $(i)$ $\sum_{i' \ne i} \rhobar^1(j_i, i') \le n_i$, $\forall$ $i \in \bar \facilityset^t$.
    $(ii)$ $\sum_{i} \rhobar^1(j_i, i') \le max\{\B, n_{i'}\} \le \beta \capacityU$, 
    $\forall$ $i' \in \bar \facilityset^t$.
 \end{claim}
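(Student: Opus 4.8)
The plan is to derive both inequalities directly from the capacity feasibility of $AS_{cap}$ in $I_{cap}$, together with a little bookkeeping of which clients the type-$1$ reassignments move. Recall that in $I_{cap}$ a small facility $i$ has capacity $u_i=\B$, and a big facility $i$ is split into $i_1,i_2$ with $u_{i_1}=\B$, $u_{i_2}=n_i-\B$; recall also the standing assumption that a facility opened in $AS_{cap}$ first serves its own demand $d_i$. The claim concerns only the partial assignment produced by step $1$ followed by the type-$1$ reassignments, so the only clients that ever leave their co-located facility sit at a small $i$ (moved by rule $(i)$) or a big $i$ (moved by rule $(ii)$), and the only facilities that ever receive extra clients are the small ones.

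For part $(i)$ I would split on $i$. If $i$ is small, every nonzero term $\rhobar^1(j_i,i')$ with $i'\ne i$ comes from a type-$1$(i) reassignment, so $\rhobar^1(j_i,i')=\rhobar^c(j_{i'},i)$ with $i'$ small, and $\sum_{i'\ne i}\rhobar^1(j_i,i')$ is exactly the total \emph{foreign} demand that $AS_{cap}$ routes to $i$; this is $0$ if $i$ is closed in $AS_{cap}$ and at most $u_i-d_i=\B-(\B-n_i)=n_i$ if $i$ is open. If $i$ is big, each nonzero term equals $\rhobar^c(j_{i'},i_1)+\rhobar^c(j_{i'},i_2)$, and since $i_1,i_2$ carry no demand, summing over all $i'$ gives $\sum_{i'}\rhobar^c(j_{i'},i_1)\le u_{i_1}=\B$ and $\sum_{i'}\rhobar^c(j_{i'},i_2)\le u_{i_2}=n_i-\B$, hence $\sum_{i'\ne i}\rhobar^1(j_i,i')\le n_i$.

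For part $(ii)$ I would again split on $i'$. If $i'$ is big, no type-$1$ reassignment targets it, so $\sum_i\rhobar^1(j_i,i')=\rhobar^1(j_{i'},i')\le n_{i'}=\max\{\B,n_{i'}\}$. If $i'$ is small, $\sum_i\rhobar^1(j_i,i')=\rhobar^1(j_{i'},i')+\sum_{i\ne i'}\rhobar^1(j_i,i')$, where the first term is at most $n_{i'}$ trivially, and unwinding the type-$1$ definitions shows the second term equals $\sum_{g\in\bar\facilityset^t,\ g\ne i'}\rhobar^c(j_{i'},g)$, i.e. the part of $i'$'s demand $d_{i'}=\B-n_{i'}$ that $AS_{cap}$ serves off-site, which is at most $\B-n_{i'}$; adding yields $\sum_i\rhobar^1(j_i,i')\le\B=\max\{\B,n_{i'}\}$. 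To close, $\max\{\B,n_{i'}\}\le\beta\capacityU$: we have $\B\le\capacityU\le\beta\capacityU$ since $\beta\ge1$, and for big $i'$ the tri-criteria solution gives $n_{i'}=|(\sigma^t)^{-1}(i')|\le\beta\capacityU$.

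The only step that really needs care is the bijective accounting in part $(ii)$ for small $i'$: one must check that the clients pulled into $i'$ by the type-$1$ rules --- counted over small host facilities via rule $(i)$ and over big host facilities via rule $(ii)$, where a big host $i$ contributes $\rhobar^c(j_{i'},i_1)+\rhobar^c(j_{i'},i_2)$ --- together account for precisely $i'$'s off-site demand in $AS_{cap}$ and are disjoint from the $n_{i'}$ co-located clients that step $1$ already assigned to $i'$. Once that identity is written out cleanly, every remaining inequality is a one-line consequence of a single capacity constraint of $AS_{cap}$ (or, for the final $\beta\capacityU$ bound, of the upper-bound violation already incurred by $S^t$).
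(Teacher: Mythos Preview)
Your argument is correct and follows essentially the same route as the paper: both parts come directly from the capacity feasibility of $AS_{cap}$ together with the standing assumption that an open facility serves its own demand first, giving $\sum_{i'\ne i}\rhobar^c(j_{i'},i)\le u_i-d_i\le n_i$ for (i) and $\sum_{i\ne i'}\rhobar^c(j_{i'},i)\le d_{i'}$ for (ii). The paper's proof is terser---it treats the small/big split uniformly via $u_i-d_i\le n_i$ ``in all the cases'' and writes $\rhobar^1(j_{i'},i')=n_{i'}$ rather than your $\le n_{i'}$---but your explicit case analysis and your remark about the bijective accounting for small $i'$ are exactly the content the paper is suppressing.
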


\begin{proof}

    $(i)$ Since $\rhobar^c(j_{i}, i) = d_i$ therefore $\sum_{i' \ne i} \rhobar^1(j_i, i') = \sum_{i' \ne i} \rhobar^c(j_{i'}, i) \le u_i - d_i \le n_i$ in all the cases. $(ii)$ $\rhobar^1(j_{i'}, i') = n_{i'}$. For $ i \ne i'$, $\rhobar^1(j_i, i') = \rhobar^c(j_{i'}, i)$. Thus, $\sum_{i} \rhobar^1(j_i, i') = n_{i'} + \sum_{i \ne i'} \rhobar^c(j_{i'}, i) \le n_{i'} + d_{i'} \le max\{\B, n_{i'}\} \le \beta \capacityU$.
\end{proof}

Note that $AS_1$ so obtained may  still not satisfy the lower bound requirement. In fact, although each facility was assigned $n_i \ge \alpha \B$  clients initially, they may be serving less clients now after type $1$ re-assignments. For example, in Fig. \ref{type1}, $t_3$ had $4$ clients initially which was reduced to $1$ after type $1$ reassignments. Let $P \subseteq  \facilityset^t$ be the set of facilities each of which is serving at least $\B$ clients after type-$1$ reassignments, and  $\bar P =  \facilityset^t \setminus P$ be the set of remaining facilities. 
 We open all the facilities in $P$ and let them serve the clients assigned to them after type $1$ reassignments.
 
 \begin{observation}
    If a  small facility $i'$ was closed in $AS_{cap}$ then $i'$ is in $P$: $\rhobar^1(j_{i'}, i') +  \sum_{i \ne i'} \rhobar^1(j_i, i')=$ $n_{i'} +  \sum_{i \ne i'} \rhobar^c(j_{i'}, i) = n_{i'} + d_{i'} \ge \B$. Similarly for a big facility $i'$, if $i'_1$ was closed in $AS_{cap}$ then $i'$ is in $P$.

 \end{observation}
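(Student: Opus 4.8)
The plan is to read off the displayed chain of equalities in the small case and to obtain the big case by a symmetric bookkeeping of which type-$1$ re-assignments add or remove clients at $i'$. Throughout I will use that $P$ collects precisely the facilities of $\facilityset^t$ serving at least $\B$ clients \emph{after} the type-$1$ re-assignments, so it suffices to count $i'$'s clients at that moment, and that $AS_{cap}$ is a feasible solution of the capacitated instance $I_{cap}$, so every demand is met and no capacity is exceeded.

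First I would handle a small $i'$ that is closed in $AS_{cap}$. Being closed, $i'$ serves no demand there, so $\rhobar^c(j_{i''}, i') = 0$ for every $i''$. Inspecting the type-$1$ rule, a client co-located at $i'$ is moved away from $i'$ only if some small centre had part of its demand routed to $i'$ in $AS_{cap}$; no such demand exists, so $\rhobar^1(j_{i'}, i') = n_{i'}$. On the other hand, since $i'$ is small, the rule sends to $i'$ exactly $\rhobar^c(j_{i'}, i)$ clients from each small $i \ne i'$ and $\rhobar^c(j_{i'}, i_1) + \rhobar^c(j_{i'}, i_2)$ clients from each big $i$; summed over $\bar\facilityset^t$ this is the total demand of $i'$ routed away in $AS_{cap}$, which is $d_{i'}$ by feasibility (as $\rhobar^c(j_{i'}, i') = 0$). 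Hence after type-$1$ the facility $i'$ serves $\rhobar^1(j_{i'}, i') + \sum_{i \ne i'}\rhobar^1(j_i, i') = n_{i'} + \sum_{i \ne i'}\rhobar^c(j_{i'}, i) = n_{i'} + d_{i'} = n_{i'} + (\B - n_{i'}) = \B$ clients, so $i' \in P$.

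Next I would treat a big $i'$ with $i'_1$ closed in $AS_{cap}$. Type-$1$ re-assignments are performed only for small centres, so no clients are moved \emph{into} the big facility $i'$; its client count after type-$1$ is therefore $n_{i'}$ minus the clients co-located at $i'$ that are re-assigned to small centres, i.e. $n_{i'} - \sum_{i''}\big(\rhobar^c(j_{i''}, i'_1) + \rhobar^c(j_{i''}, i'_2)\big)$. Now $\rhobar^c(j_{i''}, i'_1) = 0$ for all $i''$ since $i'_1$ is closed, and the capacity constraint of the free copy $i'_2$ in $AS_{cap}$ gives $\sum_{i''}\rhobar^c(j_{i''}, i'_2) \le u_{i'_2} = n_{i'} - \B$. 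Hence $i'$ retains at least $n_{i'} - (n_{i'} - \B) = \B$ clients and $i' \in P$.

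The only step needing care is this bookkeeping of which re-assignments touch $i'$: one must use that ``$i'$ closed'' (resp. ``$i'_1$ closed'') in $AS_{cap}$ forces the $\rhobar^c$-mass into $i'$ (resp. $i'_1$) to vanish, and that a big centre never receives type-$1$ clients, so the loss at a big $i'$ is capped by the capacity $n_{i'} - \B$ of its free copy $i'_2$. Everything else is the substitution $d_{i'} = \B - n_{i'}$, and I anticipate no real difficulty.
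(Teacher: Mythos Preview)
Your argument is correct and follows the same approach as the paper. The small case is exactly the displayed chain in the observation, and for the big case---which the paper dismisses with ``Similarly''---you supply the natural complementary bookkeeping (no incoming type-$1$ clients at a big centre, and outgoing clients bounded by $u_{i'_2}=n_{i'}-\B$ once $i'_1$ is closed), which is precisely what the paper's ``Similarly'' is meant to encode.
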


 Thus, a small facility is in $\bar P$ only if it was open in $AS_{cap}$ and a big facility $i$ is in $\bar P$ only if $i_1$ was open in $AS_{cap}$. 
 We now group these facilities so that each group serves at least $\B$ clients and open one facility in each group that serves all the clients in the group. For this we construct what we call as {\em facility trees}. We construct a graph $G$ with nodes corresponding to the facilities in $P \cup \bar P$. For $i \in \bar P$, let $\eta(i)$ be the nearest other facility to $i$ in $\facilityset^t$. Then, $G$ consists of edges $(i, \eta(i))$ with edge costs $ \dist{i}{\eta(i)}$. Each component of $G$ is a tree except possibly a double-edge cycle at the root. In this case, we say that we have a root, called {\em root-pair} $\rootpair$ consisting of a pair of facilities $i_{r_1}$and $ i_{r_2}$. Also, a facility $i$ from $P$, if present, must be at the root of a tree. Clearly edge costs are non-increasing as we go up the tree. 

Now we are ready to define our second type of re-assignments.
Let $x$ be a node in a tree $\mathcal T$. Let $children(x)$ denote the set of children of $x$ in $\mathcal T$. 
If $x$ is a root-pair $\rootpair$, then $children(x) = children(r_1) \cup children(r_2)$. 
Process the tree bottom-up (level by level) where processing of a node $x$ is explained in  Algorithm $1$, 
see Fig: (\ref{Process-X}). 
 While processing a node $x$, we first open (in lines $2-6$) and remove all its children with at least $\B$ clients; remaining children of $x$ are arranged and considered (left to right) in non-increasing order of distance from $x$.
 For any child $y \in children(x)$, let $right-sibling(y)$ denote the adjacent right sibling of $y$ in the arrangement; thus, $right-sibling(y_i) = y_{i+1}$.

\newpage

\begin{algorithm}[H] 
    \footnotesize
    \setcounter{AlgoLine}{0}
	\SetAlgoLined
    \DontPrintSemicolon  
    \SetKwInOut{Input}{Input}
	\SetKwInOut{Output}{Output}
    \Input{$x ( x$ can be a root-pair node)}
    
	\For {$ y \in children(x)$} {
	 \If{$n_{y} \geq \B$}  {
	    Open facility $y$ \;
		 
		 Remove edge ($y, \eta(y)$) \tcp*{Remove the connection of $y$ from its parent}
		 
		 Delete $y$ from $children(x)$;
	 }
	 }
	\If {$children(x) =  \phi$} {return;}
	
	Arrange $children(x)$ in the sequence $<y_1, \ldots y_k>$ such that $\dist{y_{i}}{\eta(y_{i})} \geq \dist{y_{i + 1}}{\eta(y_{i + 1})} ~\forall ~i = 1 \ldots k - 1$\;
	
	 
		 
		 \For {$i = 1$ to $k-1$} {  
		 
		     \eIf{$n_{y_i} \geq \B$}  {
		 
		        Open facility $y_i$\;
		
		     Remove edge ($y_i, \eta(y_i)$) \tcp*{Remove the connection of $y_i$ from its parent}
		 
		     Delete $y_i$ from $children(x)$;
		    }
		    {
		     $n_{y_{i+1}} = n_{y_{i + 1}} + n_{y_{i}}$	 \tcp*{Send the clients of $y_i$ to $y_{i+1}$}
		 
		      Remove edge ($y_{i}, \eta(y_{i})$) \tcp*{Remove the connection of $y_i$ from its parent}
		 
		      Delete $y_{i}$ from $children(x)$\;
		    }
		 }
		 
		 \eIf {$n_{y_k} \geq \B$}  {
		 
		 Open facility $y_k$ \;
		 
		 
		 Delete $y_k$ from $children(x)$\;
		 }
		 {
		 $n_{\eta(y_k)} = n_{\eta(y_k)} + n_{y_{k}}$ \tcp*{Send the clients of $y_k$ to $\eta(y_k)$}
		 
		 
		 Delete $y_k$ from $children(x)$\;
		 }
		 
		 Return\;
	
	\label{TreeProcessing}
	\caption{Process($x$)}
\end{algorithm}
  \begin{figure}[t]
    \centering
    \includegraphics[width=12cm]{ 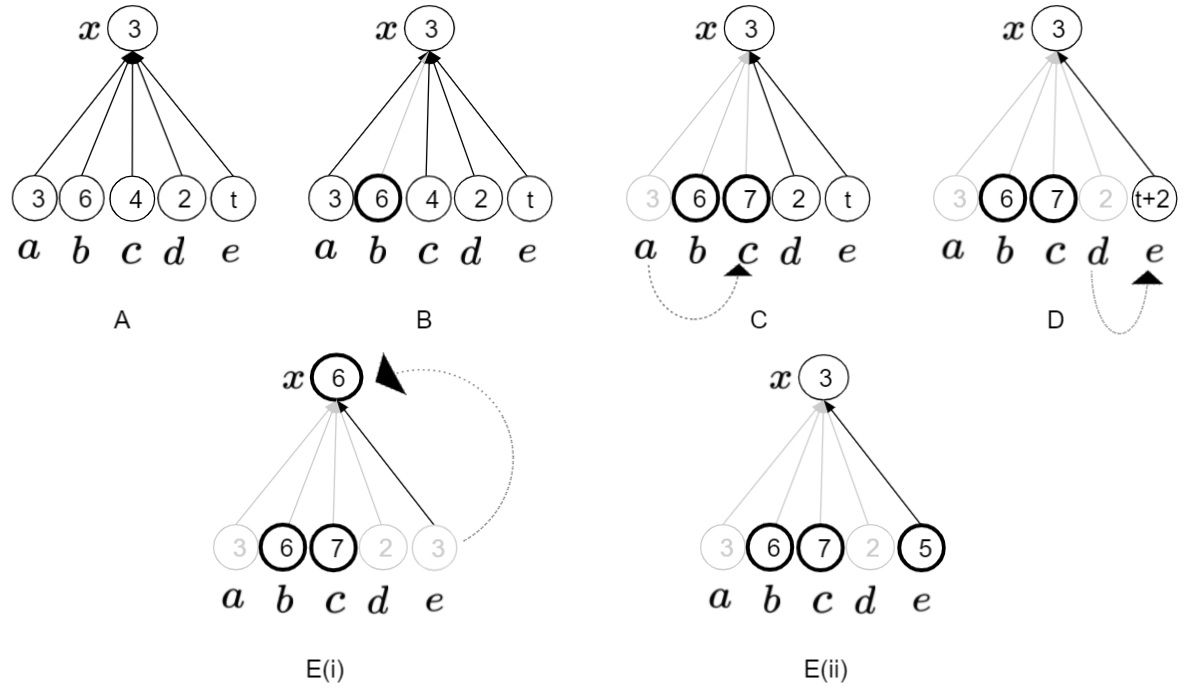}
     \caption{   Let $\B=5$, numbers inside a node represents the number of clients at the node. A dashed arrow from a node $u$ to node $v$ represents the movement of clients from $u$ to $v$; (A): A node $x$ along with its children in decreasing order of their distances from $x$; (B): In steps $1$ to $7$, we open child $b$ and delete the edge $(b, x)$; (C): In steps $18 -20$, clients from $a$ are assigned to $c$ accumulating a total of $7$ clients at $c$. In the next iteration of the 'for' loop, in lines $14-16$, $c$ gets opened and the edge ($c, x$) is deleted; (D): In the next iteration, clients of $d$ are assigned to $e$; (E(i)): $t=1$.  In the next iteration, clients of $e$ are assigned to $x$ in lines $27 - 28$; (E(ii)): $t=3$. $e$ gets opened in lines $23 - 25$;} 
    \label{Process-X}
\end{figure}

There are two possibilities at the root: either we have a facility $i$ from $P$ or we have a root-pair $\rootpair$.
In the first case, we are done as $i$ is already open and has at least $\B$ clients. See Fig: (\ref{tree12}) and (\ref{tree2}). To handle the second case, we need to do a little more work. So, we define our assignments of third type as follows: 
($i$) If the total number of clients collected at  the root-pair node 
is at least $\B$ and at most $2\B$ then open any one of the two facilities in the root-pair and assign all the clients to it.\footnote{ we can open the facility with more number of clients and save $2$(to be seen again) factor in the connection cost.} See Fig: (\ref{tree3}-(A)).
($ii$) If the total number of clients collected at  the root-pair node is more than $2\B$  then open both the facilities at the root node and distribute the clients  so that each one of them gets at least $\B$ clients. See Fig: (\ref{tree3}-(B)).
($iii$) If the total number of clients collected at  the root-pair node  is less than $\B$, then let $i$ be the node in $P$ nearest to the root-pair i.e. $i =  argmin_{i' \in P} \min \{ \dist{i'}{r_1}, \dist{i'}{r_2}\}$, then $i$ is already open and has at least $\B$ clients. Send the clients collected at the root-pair to $i$. See Fig: (\ref{tree4}).

\begin{figure}[t]
	\begin{tabular}{ccc}
		\includegraphics[width=9cm,scale=0.8]{ 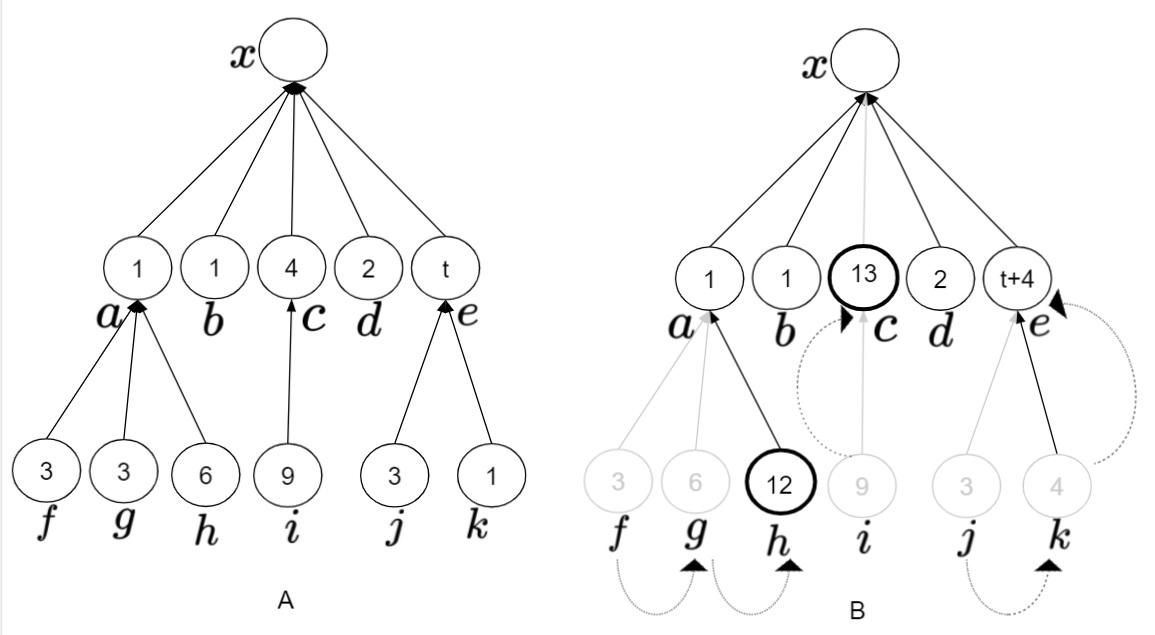}
		& 
		\includegraphics[width=4cm,scale=0.8]{ 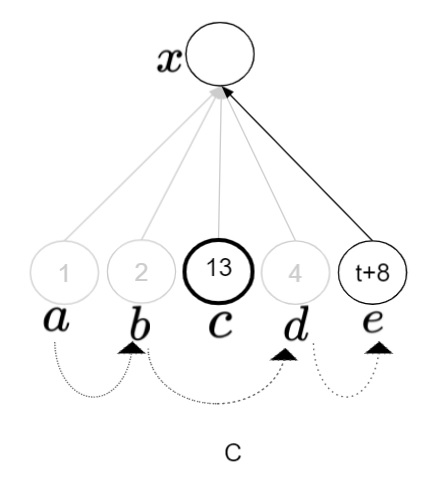}
		\\
	\end{tabular}
	\caption{$\B=12$. (A) A tree $\mathcal{T}$;  Numbers inside each node represents the number of clients at those nodes.
	The tree does not change after the calls: Process($f$), Process($g$), Process($h$), Process($i$), Process($j$) and Process($k$). In each case, it returns at line no. $9$;  (B) Tree after Process($a$), Process($b$), Process($c$), Process($d$) and Process($e$); (C) Tree after line $22$ of Process($x$).
	}
	
	\label{tree12}
\end{figure}


 \begin{figure}[t]
    \centering
    \includegraphics[width=4cm]{ 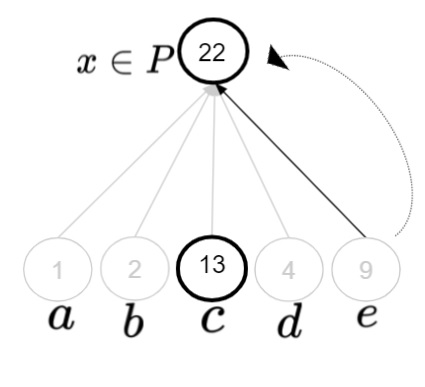}
    \caption{ $\B=12,~t=1$. Before Process($x$): $n_x = 13$. Tree after Process($x$) when root node $x \in P$.  }
    \label{tree2}
\end{figure}

 \begin{figure}[t]
    \centering
    \includegraphics[width=10cm]{ 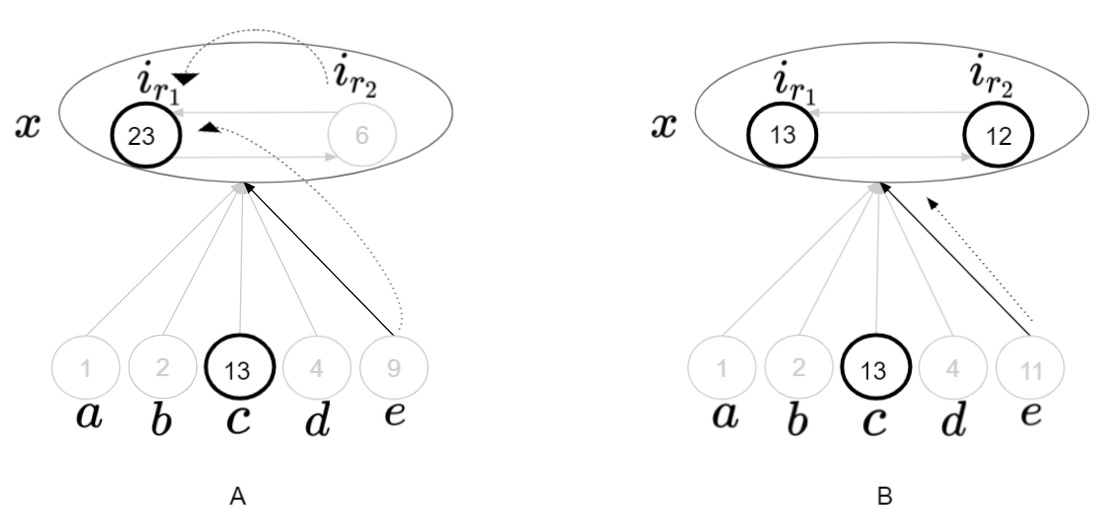}
    \caption{    $\B=12$, $x = <i_{r_1}, i_{r_2}>$ is a root-pair node with $n_{i_{r_1}} = 8,~n_{i_{r_2}} = 6$. (A) $t =1$. Before Process($x$): $\B \le n_{i_{r_1}} + n_{i_{r_2}} + 9 \le 2\B$. Tree after Process($x$): any one of $i_{r_1}$ and $i_{r_2}$, say $i_{r_1}$, is opened and all the clients are assigned to it; (B)$t =3$. Before Process($x$): $2\B \le n_{i_{r_1}} + n_{i_{r_2}} + 11 \le 3\B$. Tree after Process($x$): $i_{r_1}$ and $i_{r_2}$ are both opened and clients are distributed among them.}
    
    \label{tree3}
\end{figure}

 \begin{figure}[t]
    \centering
    \includegraphics[width=5cm]{ 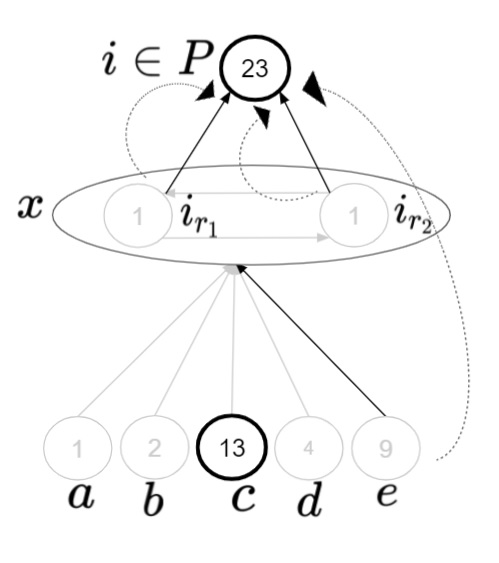}
    \caption{$\B=12, t =1$. $x = <i_{r_1}, i_{r_2}>$ is a root-pair node with $ n_{i_{r_1}} + n_{i_{r_2}} + 9 < \B$. Tree after Process($x$).}
    
    \label{tree4}
\end{figure}

Clearly the opened facilities satisfy the lower bounds. Next, we bound the violation in the upper bound.
  
\begin{lemma}

The number of clients collected in the second type of re-assignments, at any non-root node $x$ is at most $2\B$.
 \end{lemma}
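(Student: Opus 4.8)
The plan is to track, during the bottom-up processing of a fixed tree $\mathcal T$, how many clients can ever pile up at a single non-root node $x$, and argue the total never exceeds $2\B$. First I would set up the key structural invariant maintained by Algorithm~1: \emph{whenever \textrm{Process}($x$) is called, every node that is still present in $children(x)$ carries strictly fewer than $\B$ clients}. This holds because, in the two \texttt{if}-branches that survive without opening (lines with the $n_{y_{i+1}} = n_{y_{i+1}} + n_{y_i}$ update, and the analogous line at $y_k$), a child is only kept and passed upward when its count was below $\B$ at the time it was deleted from $children(x)$; and the very first loop (lines $2$--$6$) strips away every child that already has $\ge \B$. Consequently, when we start accumulating clients into a child $y_{i+1}$ (or into $x$ itself via $\eta(y_k)$), the node receiving clients had $< \B$ clients just before the transfer.

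Next I would bound a single transfer. When the ``for'' loop reaches index $i$ and takes the \texttt{else} branch, it does $n_{y_{i+1}} \mathrel{+}= n_{y_i}$, where at that instant $n_{y_i} < \B$ (it was not opened, so it failed the $n_{y_i} \ge \B$ test) and $n_{y_{i+1}} < \B$ (it is still a child of $x$ and has not yet been processed in this loop, so it has only received clients from earlier siblings up to this point). I need to be careful: could $y_{i+1}$ have already absorbed a previous sibling $y_i$'s clients and thus already be large? No — the moment $n_{y_{i+1}}$ reaches $\B$ or more, the \emph{next} iteration (index $i+1$) enters the \texttt{if} branch, opens $y_{i+1}$, removes it, and it never receives further clients. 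So at the time of \emph{any} accumulation into a node, that node holds $< \B$, and after a single transfer it holds strictly less than $\B + \B = 2\B$. The same argument applies to the final transfer into $\eta(y_k) = x$ in the last \texttt{else} branch: $x$ is a non-root node, $n_x < \B$ when this happens (it has not itself been processed yet, being strictly above $x$'s level — wait, $x$ receives clients while being a child of its own parent only later), so after absorbing $y_k$ (with $n_{y_k} < \B$) it holds $< 2\B$. Since a non-root node $x$ receives clients \emph{exactly once} in the second type of reassignment — either as the $y_{i+1}$ of exactly one left-sibling, or as the $\eta(y_k)$ of the last child of its own parent, but these two events cannot both occur before $x$ is itself processed and removed — the count at $x$ never exceeds $2\B$.

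The main obstacle I anticipate is precisely the bookkeeping in the previous paragraph: verifying that a non-root node genuinely absorbs clients at most once, and that at the moment of absorption both the donor and the recipient are below $\B$. This requires a careful reading of which edges are removed when, and the observation that as soon as a node's count hits $\B$ it is opened and detached (so it is ``frozen'' and cannot grow further, nor can it keep receiving from siblings down the line). I would phrase this as a short induction on the level being processed, using the invariant above as the inductive hypothesis, and then the $2\B$ bound follows immediately since each accumulation is a single addition of two quantities each $< \B$. The root case is explicitly excluded from this lemma (and handled separately by the third-type reassignments, which is why the root-pair can collect up to $3\B$ and must be split).
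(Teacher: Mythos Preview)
Your approach is essentially the paper's: show that whenever a non-root node $x$ receives clients, it held $<\B$ just beforehand and receives $<\B$, so ends up with $<2\B$. The paper's proof is a two-line version of exactly this.

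One correction, though: your claim that a non-root node $x$ receives clients \emph{exactly once} is false. It can receive twice --- first from its last child $y_k$ at line~27 during Process($x$), and then again from a left sibling at line~18 during Process($\eta(x)$). Your phrase ``these two events cannot both occur before $x$ is itself processed and removed'' is not right: the child-to-parent transfer happens \emph{during} Process($x$), and $x$ is not removed at that point; it remains in the tree and is later a child when Process($\eta(x)$) runs. Fortunately this does not break the argument, because your own earlier (and correct) invariant already handles it: after the first receipt, either $n_x \ge \B$ and $x$ is stripped out by the lines~2--6 loop of Process($\eta(x)$) with $n_x < 2\B$, or $n_x < \B$ and the sibling transfer again takes it from $<\B$ to $<2\B$. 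So drop the ``exactly once'' sentence and let the invariant ``recipient has $<\B$ immediately before any transfer'' carry the proof on its own --- that is precisely what the paper does.
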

 
 \begin{proof} 
 Note that a node $x$ receives clients either from one of its children (at line $27$) when Process($x$) is invoked  or from one of its siblings (at line $18$) when Process($\eta(x)$) is invoked. In either case, it has $< \B$ clients of its own and it receives $<\B$ clients from its child/sibling.  Hence, the claim follows.
 \end{proof}
  
  If the number of clients collected at $i_c$ is $\ge \B$, it is opened (at line $24$) and removed from further processing when Process($\eta(i_c)$) is invoked. 
 Let $\pi_i$ be the number of clients assigned to the facility $i$ after first type of re-assignments. Then as shown earlier, $\pi_i \le \beta \capacityU$.  
 We have the following cases: ($i$) the root node $i$ is in $P$ and it gets additional  $< \B$ clients from $i_c$ making a total of at most $\pi_i + \B \le \beta \capacityU + \B \le (\beta + 1) \capacityU$ clients at $i$.
 ($ii$) the total number of clients collected at  the root-pair node 
is at least $\B$ and at most $2\B $.
 The number of clients the opened facility gets is at most $2\B \le 2\capacityU \le (\beta + 1) \capacityU$.
 ($iii$)  the total number of clients collected at the root-pair node  is more than $2\B$.  Note that the total number of clients collected at these facilities is at most $3\B$ and hence 
 ensuring that each of them gets at least $\B$ clients also ensures that none of them gets more than $2\B$ clients. Hence none of them gets more than $2\capacityU \le (\beta + 1) \capacityU$ clients.
 ($iv$) the total number of clients collected at the root-pair node is less than $\B$. 
 Thus, the number of clients the opened facility in $P$ gets is at most $\beta \capacityU + \B \le (\beta + 1) \capacityU$. 

 We next bound the connection cost.  
 We will need the following lemma to bound the connection cost.

\begin{lemma}
 For a node $y$ in a tree,
  $\dist{y}{right-sibling(y)} \le 3 \dist{y}{\eta(y)} = 3l(y)$. 
 \end{lemma}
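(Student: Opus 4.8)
The plan is to bound the distance from $y$ to its right sibling by routing through their common parent in the facility tree $\mathcal T$ via the triangle inequality, using the fact that the siblings are sorted in non-increasing order of distance to the parent. Write $z$ for the right sibling of $y$ (so $y$ is not the last child). The constant $3$, rather than the naive $2$, is forced only when the parent is a root-pair, so I would split into two cases.

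\emph{Ordinary parent.} Suppose $y$ and $z$ are children of an ordinary node $x$. Then $\eta(y)=\eta(z)=x$, so $l(y)=\dist{x}{y}$ and $l(z)=\dist{x}{z}$, and since $z$ is the right sibling of $y$ while $\mathrm{children}(x)$ is arranged in non-increasing order of $\dist{\cdot}{\eta(\cdot)}$ (the arrangement step of Algorithm~$1$), we get $l(z)\le l(y)$. The triangle inequality then gives $\dist{y}{z}\le \dist{y}{x}+\dist{x}{z}=l(y)+l(z)\le 2l(y)\le 3l(y)$. The same computation applies verbatim if the parent is a root-pair $\rootpair$ but $y$ and $z$ are attached to the \emph{same} member $r_a$, since then $\eta(y)=\eta(z)=r_a$.

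\emph{Root-pair parent with different anchors.} Now suppose the parent is a root-pair $\rootpair$ and, without loss of generality, $\eta(y)=r_1$ while $\eta(z)=r_2$. Recall that a root-pair corresponds to a double edge of $G$, i.e.\ $\eta(r_1)=r_2$; hence $r_2$ is the facility of $\facilityset$ closest to $r_1$ (among all facilities other than $r_1$), and since $y\neq r_1$ this yields $\dist{r_1}{r_2}\le \dist{r_1}{y}=l(y)$. Combining this with $l(z)=\dist{z}{r_2}\le l(y)$ (the sibling ordering again) and chaining triangle inequalities along the path $y\to r_1\to r_2\to z$:
\[
\dist{y}{z}\le \dist{y}{r_1}+\dist{r_1}{r_2}+\dist{r_2}{z}=l(y)+\dist{r_1}{r_2}+l(z)\le 3l(y),
\]
which is the claimed bound.

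The only delicate point — and the sole real obstacle — is the identity $\eta(r_1)=r_2$ used above: it relies on the earlier structural observation that each component of $G$ is a tree except for a possible double edge at the root, which itself uses the assumption that all pairwise distances are distinct (so that no longer cycle can arise and the two root facilities are mutual nearest neighbours). Everything else is a routine combination of the triangle inequality with the sorted order of siblings, and note that the client counts play no role at all in this lemma.
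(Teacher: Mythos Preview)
Your proof is correct and follows essentially the same approach as the paper: split into the case $\eta(y)=\eta(\text{right-sibling}(y))$ (giving $2l(y)$ via the parent) and the root-pair case (giving $3l(y)$ by routing $y\to r_1\to r_2\to z$). You spell out explicitly the two bounds $\dist{r_1}{r_2}\le l(y)$ and $l(z)\le l(y)$ that the paper leaves to the reader, and you correctly identify that the former relies on the double-edge-at-the-root structure of $G$.
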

 \begin{proof}
 If $\eta(\cone) = \eta(right-sibling(\cone))$  (i.e. $\cone$ and $right-sibling(\cone)$ are children of the same node), then it is easy to see that $\dist{\cone}{right-sibling(\cone)} \leq 2 \dist{\cone}{\eta(\cone)}$. Otherwise (which can be the case when root is a root-pair, see Figure~(\ref{3factorBound})), $\dist{\cone}{right-sibling(\cone)} \leq \dist{\cone}{\eta(\cone)} + \dist{\eta(\cone)}{\eta(right-sibling(\cone))} + \dist{\eta(right-sibling(\cone))}{right-sibling(\cone)} \le 3 \dist{\cone}{\eta(\cone)}$.
 \end{proof}
 
  \begin{figure}[t]
    \centering
    \includegraphics[width=7cm]{ 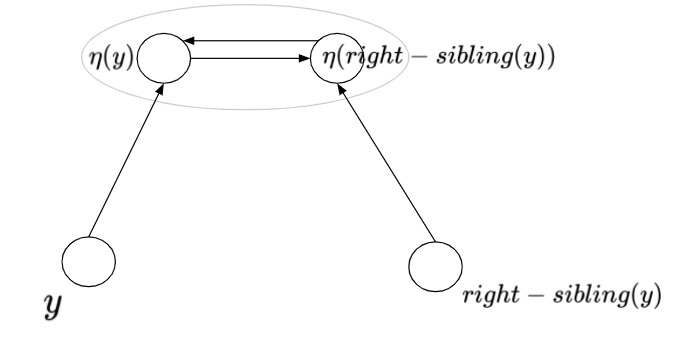}
    \caption{$\dist{\cone}{right-sibling(\cone)}  \le 3 \dist{\cone}{\eta(\cone)}$}
    \label{3factorBound}
\end{figure}

 The cost of first type of re-assignments is at most the connection cost of $AS_{cap}$.
 To bound the connection cost of second type, observe that we never send more than $\B$ clients on any edge. 
 Thus, a facility $i$ that sends its (at most $\B$ clients) to its parent or to its right sibling incurs a cost of at most $3\B l(i)$. Since the facility $i$ was opened in $AS_{cap}$, it pays facility opening cost of $\delta l(i) min \{n_i, \B\} \ge \delta l(i) \alpha \B$ in $AS_{cap}$. Summing over all $i \in \bar P$ for the second type of re-assignments, we see that the total connection cost in these re-assignments is bounded by  $3 FCost_{I_{cap}}(AS_{cap})/\alpha \delta$ 
 where $FCost_{I_{cap}}(.)$ is the facility opening cost of a solution to the $I_{cap}$ instance.

 For the third type of re-assignments, we bound the cost as follows:
 the cost of sending ($< \B$) clients by a facility $i$ in the root pair $\rootpair$ to the facility in $P$, nearest to the root-pair, is not bounded by $\B l(i)$ (where $l(i) = l(i_{r_1}) = l(i_{r_2})$). 
 Note that the total number of clients, initially at $i_{r_1}$ and $i_{r_2}$ was $\ge 2 \alpha \B$. However, during re-assignments of type $1$, some of them got reassigned to other facilities. Note that these (other) facilities must have been closed in $AS_{cap}$ and hence do not belong to $\bar P$. Hence, each of these reassignments correspond to an assignment in $AS_{cap}$ whose cost was at least $l(i)$ and a total of at least $(2\alpha  - 1)\B l(i)$. Hence the cost of type $3$ re-assignments is bounded by $3 CCost_{I_{cap}}(AS_{cap})/ (2\alpha  - 1)$ where $CCost_{I_{cap}}(.)$ is the connection cost of a solution to the $I_{cap}$ instance. Hence, the total cost of solution $AS_1$  is $max\{1+\frac{3}{(2\alpha-1)},\frac{3}{\alpha\delta}\}Cost_{I_{cap}}(AS_{cap}) = max\{\frac{2(\alpha+1)}{(2\alpha-1)},\frac{3}{\alpha\delta}\} Cost_{I_{cap}}(AS_{cap}) = \frac{2(\alpha+1)}{(2\alpha-1)} Cost_{I_{cap}}(AS_{cap}) $ for $\delta= \frac{3(2\alpha-1)}{2\alpha(\alpha+1)} $.

Thus, the total cost of the solution is,\\ $ Cost_I(S) \le  Cost_I(S^t) + Cost_{I_1}(AS_1)$\\
   $\leq O(1) Cost_I(O) + \frac{2(\alpha+1)}{(2\alpha-1)} Cost_{I_{cap}}(AS_{cap}) $\\
    $\leq O(1) Cost_I(O) +  \frac{2(\alpha+1)(5+\epsilon)}{(2\alpha-1)} Cost_{I_{cap}}(O_{cap})$\\
   $\leq O(1) Cost_I(O) +  \frac{12(\alpha+1)}{(2\alpha-1)} Cost_{I_{cap}}(O_{cap})$\\
   $ \leq O(1) Cost_I(O) + \frac{12(\alpha+1)(1 + 2\delta)}{(2\alpha-1)}  Cost_{I_2}(O_2) $\\
   $\leq O(1) Cost_I(O) + \frac{12(\alpha^2 + 7 \alpha -3)}{\alpha (2\alpha-1)}  Cost_{I_2}(O_2)$\\
   $= O(1) Cost_I(O) +  \frac{24(\alpha^2 + 7 \alpha -3)}{\alpha (2\alpha-1)}  Cost_{I_1}(O_1)$\\
    $= O(1) Cost_I(O) +  \frac{24(\alpha^2 + 7 \alpha -3)}{\alpha (2\alpha-1)} (Cost_I(S^t) + Cost_I(O))$\\
   $= O(1) Cost_I(O) +  \frac{24(\alpha^2 + 7 \alpha -3)}{\alpha (2\alpha-1)} (O(1) Cost_I(O))$\\
   $=O(1) Cost_I(O)$
   

 

\section{Conclusion and Future Work}
In this paper, we presented 
the first (constant) approximation algorithm for facility location problem with uniform lower and upper bounds without violating the lower bounds.
Upper bounds are violated by ($5/2$)-factor. 
Violation in the upper bound is less when the gap between the lower and the upper bound is large. For example, if $\B \le \capacityU/2$ then 
the upper bound violation is at most $\beta + \frac{1}{2} \le 2$.


In future, if one can obtain a tri-criteria solution (with $\alpha > 1/2$) when one of the bounds is uniform and the other is non-uniform, then it can be simply plugged into our technique to obtain similar result for the problem. When the upper bounds are non-uniform and 
$\tau = \frac{1}{\max_i \{\capacityU_i/\B \}}$ then $ \B \le \tau \capacityU_i$ for all $i$ and $\tau \le 1$. Also, violation in the upper bounds, then, is $\le \beta + \tau \le \beta + 1$. When the lower bounds are non-uniform and $\tau = \frac{1}{\max_i \{\capacityU/\B_i \}}$ then $ \B_i \le \tau \capacityU$ for all $i$ and $\tau \le 1$. As before, violation in the upper bounds is $\le \beta + \tau \le \beta + 1$.
Note that, for this case, we can not use the tri-criteria solution of Friggstad \etal~\cite{friggstad_et_al_LBUFL} as $\alpha < 1/2$ in their solution. 




\bibliography{ref}

\begin{thebibliography}{10}

\bibitem{mathp}
Ankit Aggarwal, Anand Louis, Manisha Bansal, Naveen Garg, Neelima Gupta,
  Shubham Gupta, and Surabhi Jain.
\newblock A 3-approximation algorithm for the facility location problem with
  uniform capacities.
\newblock {\em Journal of Mathematical Programming}, 141(1-2):527--547, 2013.

\bibitem{Ahmadian_LBFL}
Sara Ahmadian and Chaitanya Swamy.
\newblock Improved approximation guarantees for lower-bounded facility
  location.
\newblock In Thomas Erlebach and Giuseppe Persiano, editors, {\em Approximation
  and Online Algorithms}, pages 257--271, Berlin, Heidelberg, 2013. Springer
  Berlin Heidelberg.

\bibitem{Anfocs2014}
Hyung-Chan An, Mohit Singh, and Ola Svensson.
\newblock Lp-based algorithms for capacitated facility location.
\newblock In {\em FOCS, 2014}, pages 256--265.

\bibitem{arutyunova_LBkM}
Anna Arutyunova and Melanie Schmidt.
\newblock Achieving anonymity via weak lower bound constraints for k-median and
  k-means, 2020.

\bibitem{Bansal}
Manisha Bansal, Naveen Garg, and Neelima Gupta.
\newblock A 5-approximation for capacitated facility location.
\newblock In {\em Algorithms - {ESA} 2012 - 20th Annual European Symposium,
  Ljubljana, Slovenia, September 10-12, 2012. Proceedings}, pages 133--144,
  2012.

\bibitem{capkmByrkaFRS2013}
Jaroslaw Byrka, Krzysztof Fleszar, Bartosz Rybicki, and Joachim Spoerhase.
\newblock Bi-factor approximation algorithms for hard capacitated
  \emph{k}-median problems.
\newblock In {\em {SODA} 2015}, pages 722--736.

\bibitem{ChudakW99}
Fabi{\'a}n~A. Chudak and David~P. Williamson.
\newblock Improved approximation algorithms for capacitated facility location
  problems.
\newblock In {\em Proceedings of 7th International Conference on Integer
  Programming and Combinatorial Optimization (IPCO), Graz, Austria,}, pages
  99--113, 1999.

\bibitem{friggstad_et_al_LBUFL}
Zachary Friggstad, Mohsen Rezapour, and Mohammad~R. Salavatipour.
\newblock {Approximating Connected Facility Location with Lower and Upper
  Bounds via LP Rounding}.
\newblock In {\em 15th Scandinavian Symposium and Workshops on Algorithm Theory
  (SWAT 2016)}, volume~53 of {\em Leibniz International Proceedings in
  Informatics (LIPIcs)}, pages 1:1--1:14, Dagstuhl, Germany, 2016. Schloss
  Dagstuhl--Leibniz-Zentrum fuer Informatik.

\bibitem{GroverGKP18}
Sapna Grover, Neelima Gupta, Samir Khuller, and Aditya Pancholi.
\newblock Constant factor approximation algorithm for uniform hard capacitated
  knapsack median problem.
\newblock In {\em 38th {IARCS} Annual Conference on Foundations of Software
  Technology and Theoretical Computer Science, {FSTTCS} 2018, December 11-13,
  2018, Ahmedabad, India}, volume 122 of {\em LIPIcs}, pages 23:1--23:22.
  Schloss Dagstuhl - Leibniz-Zentrum f{\"{u}}r Informatik, 2018.

\bibitem{Guha_LBFL}
S.~Guha, A.~Meyerson, and K.~Munagala.
\newblock Hierarchical placement and network design problems.
\newblock In {\em Proceedings of the 41st Annual Symposium on Foundations of
  Computer Science}, FOCS '00, page 603, USA, 2000. IEEE Computer Society.

\bibitem{Guo_LBkM}
Yutian Guo, Junyu Huang, and Zhen Zhang.
\newblock A constant factor approximation for lower-bounded k-median.
\newblock In Jianer Chen, Qilong Feng, and Jinhui Xu, editors, {\em Theory and
  Applications of Models of Computation}, pages 119--131, Cham, 2020. Springer
  International Publishing.

\bibitem{Han_LBkM}
Lu~Han, Chunlin Hao, Chenchen Wu, and Zhenning Zhang.
\newblock Approximation algorithms for the lower-bounded k-median and its
  generalizations.
\newblock In Donghyun Kim, R.~N. Uma, Zhipeng Cai, and Dong~Hoon Lee, editors,
  {\em Computing and Combinatorics}, pages 627--639, Cham, 2020. Springer
  International Publishing.

\bibitem{Minkoff_LBFL}
D.~R. {Karger} and M.~{Minkoff}.
\newblock Building steiner trees with incomplete global knowledge.
\newblock In {\em Proceedings 41st Annual Symposium on Foundations of Computer
  Science}, pages 613--623, 2000.

\bibitem{KPR}
Madhukar~R. Korupolu, C.~Greg Plaxton, and Rajmohan Rajaraman.
\newblock Analysis of a local search heuristic for facility location problems.
\newblock {\em Journal of Algorithms}, 37(1):146--188, 2000.

\bibitem{LeviSS12}
Retsef Levi, David~B. Shmoys, and Chaitanya Swamy.
\newblock Lp-based approximation algorithms for capacitated facility location.
\newblock {\em Journal of Mathematical Programming}, 131(1-2):365--379, 2012.

\bibitem{Li_NonUnifLBFL}
Shi Li.
\newblock On facility location with general lower bounds.
\newblock In {\em SODA}, pages 2279--2290, 2019.

\bibitem{Lim_transportation}
Andrew Lim, Fan Wang, and Zhou xu.
\newblock A transportation problem with minimum quantity commitment.
\newblock {\em Transportation Science}, 40:117--129, 02 2006.

\bibitem{mahdian_universal}
Mohammad Mahdian and Martin P{\'a}l.
\newblock Universal facility location.
\newblock In {\em Proceedings of the 11th Annual European Symposium on
  Algorithms (ESA), Budapest, Hungary}, volume 2832 of {\em Lecture Notes in
  Computer Science}, pages 409--421, 2003.

\bibitem{paltree}
M.~P\'{a}l, \'{E}. Tardos, and T.~Wexler.
\newblock Facility location with nonuniform hard capacities.
\newblock In {\em Proceedings of the 42nd IEEE Symposium on Foundations of
  Computer Science (FOCS), Las Vegas, Nevada, USA}, pages 329--338, 2001.

\bibitem{Shmoys}
David~B. Shmoys, {\'E}va Tardos, and Karen Aardal.
\newblock Approximation algorithms for facility location problems (extended
  abstract).
\newblock In {\em Proceedings of the Twenty-Ninth Annual ACM Symposium on the
  Theory of Computing, El Paso, Texas, USA,}, pages 265--274, 1997.

\bibitem{Zoya_LBFL}
Zoya Svitkina.
\newblock Lower-bounded facility location.
\newblock {\em ACM Trans. Algorithms}, 6(4), September 2010.

\bibitem{zhangchenye}
Jiawei Zhang, Bo~Chen, and Yinyu Ye.
\newblock A multiexchange local search algorithm for the capacitated facility
  location problem.
\newblock {\em Journal of Mathematics of Operations Research}, 30(2):389--403,
  2005.

\end{thebibliography}

\end{document}